\theoremstyle{definition}
\newtheorem{definition}{Definition}
\newtheorem{lemma}{Lemma}
\newcolumntype{P}[1]{>{\centering\arraybackslash}p{#1}}
\newcolumntype{M}[1]{>{\centering\arraybackslash}m{#1}}
\newtheorem{proposition}{Proposition}
\newtheorem{theorem}{Theorem}
\newtheorem{corollary}{Corollary}[lemma]
\def\BibTeX{{\rm B\kern-.05em{\sc i\kern-.025em b}\kern-.08em
    T\kern-.1667em\lower.7ex\hbox{E}\kern-.125emX}}
\begin{document}

\title{TCP Slice: A semi-distributed TCP algorithm for Delay-constrained Applications\\
}

\author{\IEEEauthorblockN{Dibbendu Roy}
\IEEEauthorblockA{\textit{Department of Electrical Engineering} \\
\textit{IIT Indore}\\
droy@iiti.ac.in}
\and
\IEEEauthorblockN{Goutam Das}
\IEEEauthorblockA{\textit{G.S. Sanyal School of Telecommunications} \\
\textit{IIT Kharagpur}\\
Kharagpur, India \\
gdas@gssst.iitkgp.ac.in}
}

\maketitle

\begin{abstract}
The TCP congestion control protocol serves as the cornerstone of reliable internet communication. However, as new applications require more specific guarantees regarding data rate and delay, network management must adapt. Thus, service providers are shifting from decentralized to centralized control of the network using a software-defined network controller (SDN). The SDN classifies applications and  allocates logically separate resources called “slices”, over the physical network. We propose “TCP Slice”, a congestion control algorithm that meets specific delay and bandwidth guarantees. Obtaining closed-form delay bounds for a client is challenging due to dependencies on other clients and their traffic stochasticity. We use network calculus to derive the client's delay bound and incorporate it as a constraint in the Network Utility Maximization problem. We solve the resulting optimization using dual decomposition and obtain a semi-distributed TCP protocol that can be implemented with the help of SDN controller and the use of Explicit Congestion Notification (ECN) bit.  Additionally, we also propose a  proactive approach for congestion control using digital twin. TCP Slice represents a significant step towards accommodating evolving internet traffic patterns and the need for better network management in the face of increasing application diversity.

\end{abstract}

\begin{IEEEkeywords}
TCP, Network slicing, SDN, Network calculus, semi-distributed, congestion control
\end{IEEEkeywords}
\section{Introduction}
The TCP congestion control protocol is certainly the backbone of reliable internet communication and has been the de facto standard ever since its inception in the 1980s \cite{jacobson1988congestion}. The success of TCP can be attributed to its capability of being deployed in a distributed manner without necessitating complete knowledge of the network state. Thus, each client can send data over TCP without being concerned about the network's other clients, leading to a highly scalable system. However, over the years, the nature of internet traffic has changed significantly due to increased accessibility and the introduction of new applications and services \cite{han2020framework}. The days of being satisfied with basic data delivery assurances are long gone. As an example, Table \ref{tab:reqmap} shows the diverse quality of service (QoS) requirements of 5G use cases \cite{5GKPIs5G30:online}. It is anticipated that this diversity will continue to expand with the emergence of newer application scenarios, and future networks, including 6G \cite{giordani2020toward,6GConne15:online}, and therefore the future internet will be required to accommodate these diverse demands.

\begin{table}[h]
\centering
\caption{Application, Slice and Requirement Mapping in 5G \cite{5GKPIs5G30:online}}
    \label{tab:reqmap}
    
    \begin{tabular}{|M{0.35\columnwidth}|M{0.14\columnwidth}|M{0.35\columnwidth}|}
    \hline
         \bfseries Use Cases & \bfseries Slice & \bfseries QoS \\
         \hline
          Augmented and Virtual reality & eMBB & $\text{Downstream} \geq 100 \text{Mbps}, \text{Uplink} \geq 50 \text{Mbps}, \text{Latency} \leq 4 \text{ms}$\\
         \hline
         Autonomous Cars, Remote Surgery & uRLLC & $\text{Latency} \leq 1 \text{ms}, \text{Reliability} = 99.999\%$\\
         \hline
         IoT, Smart Factories, Platooning & mMTC & $\text{Connection Density}=1\times 10^6 \text{devices/Km}^2$\\
         \hline
    \end{tabular}
\end{table}
To address the aforementioned needs, better management of networks is necessary. Thus, a more centralized control of the network through software-defined networks (SDN) has gained popularity \cite{benzekki2016software,mckeown2008openflow} (see \figurename \ref{fig:sysmod}). Specifically, to meet the various QoS requirements of emerging new applications, the notion of network slicing is introduced \cite{foukas2017network}. The idea behind network slicing is to divide a single physical network infrastructure into multiple virtual networks, each tailored to serve different use cases with distinct performance, security, and resource allocation requirements. Each network slice operates independently and can be managed and optimized separately. Evidently, the implementation of slices requires global network state information, 
making SDN a key enabler for the same. The SDN can run network operations such as routing, network address translation, and network slice management by implementing network functions on top of its control plane (known as Network function virtualization - NFV, see \figurename \ref{fig:sysmod}). As shown in Table \ref{tab:reqmap}, based on QoS requirements, 5G defines three main categories of slices, enhanced Mobile Broadband (eMBB), ultra Reliable Low Latency Communications (uRLLC), and massive Machine Type Communications (mMTC). Keeping these developments in mind, it is important to look into the possibilities of extending the capabilities of traditional TCP algorithms. 
\begin{figure}
\centering\includegraphics[width=0.9\columnwidth]{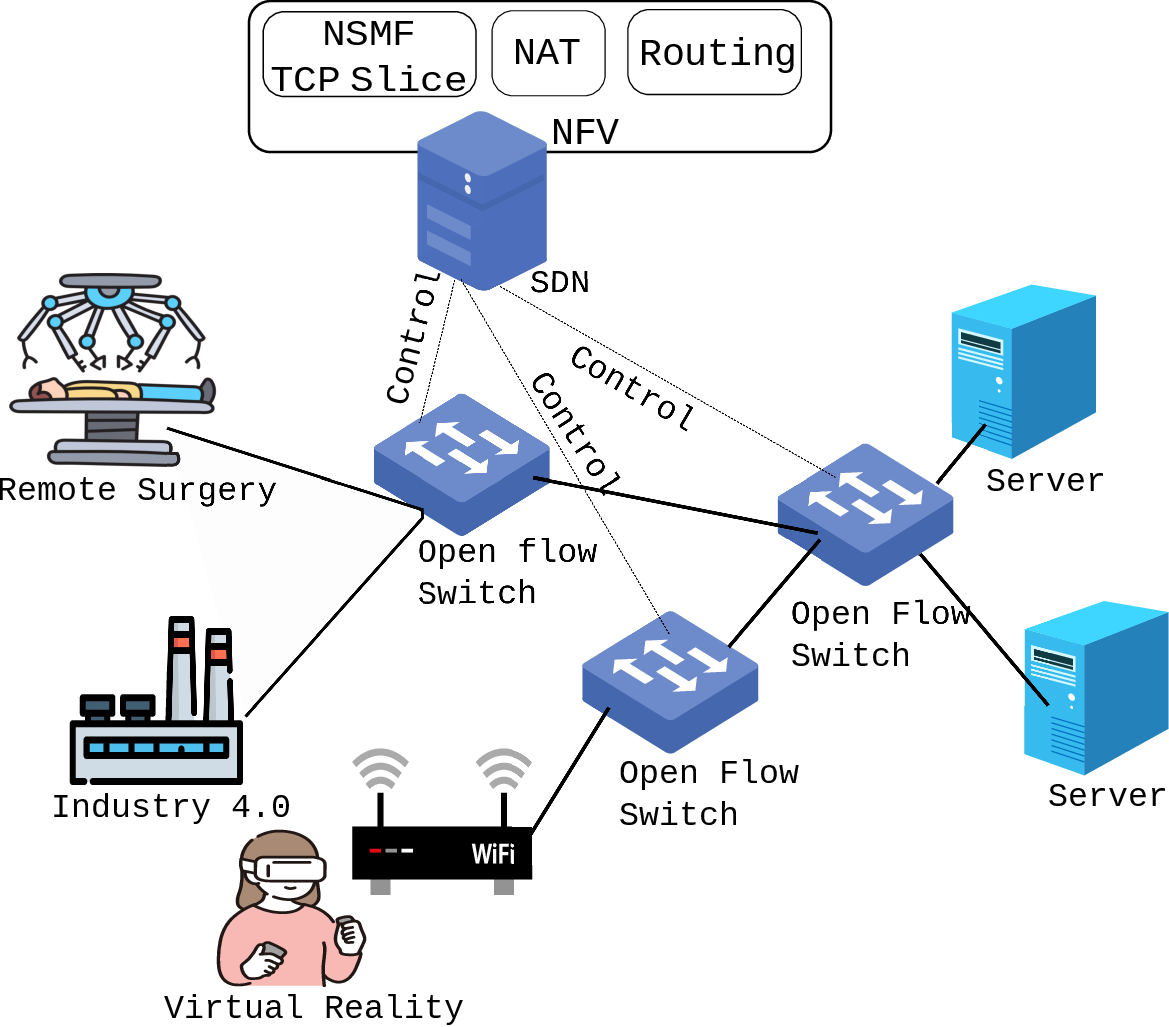}
    \caption{SDN-based network with proposed TCP Slice as a Network Slice Management Function (NSMF) implemented with Network Function Virtualization (NFV).}
    \label{fig:sysmod}
\end{figure}

\subsection{Related Works and Challenges}
TCP, being the fundamental technology for moving internet traffic, is a well-researched topic with vast literature \cite{jacobson1988congestion,jacobson1992tcp,brakmo1994tcp,sally2003highspeed,cardwell2016bbr,cardwell2017bbr}. In this paper, we propose am analytical TCP algorithm that specifically meets delay guarantees which can be employed for both eMBB and uRLLC slice use cases. We resort to the optimization-based approach for congestion control \cite{low1999optimization,kelly1998rate,low2022analytical} that maximizes the total network utility of all users/clients in the network subject to network capacity constraints. The conventional method to yield a distributed control algorithm using such an optimization formulation is to use the dual decomposition technique \cite{luo1993convergence,alghunaim2020linear}. Since the total utility and capacity constraints turn out to be easily separable in the Lagrangian form, the resulting algorithm is simple and exhibits desirable properties. Further, it could be shown that for certain choices of utility functions and parameters, the algorithm follows the widely popular additive increase multiplicative decrease (AIMD) \cite{kelly1998rate} scheme. Variants such as TCP Reno \cite{jacobson1988congestion}, TCP Vegas \cite{brakmo1994tcp} etc., can also be analyzed using the mentioned framework \cite{low2022analytical}.

Although there have been attempts to minimize delay in TCP \cite{luo2017standardization}, to the best of our knowledge, none of them can guarantee delay bounds, specifically due to the following reason: It is evident that the end-to-end delay experienced by a client is influenced not only by its own sending rate but also by the sending rates of other clients sharing network resources. Moreover, the stochastic nature of queuing delay in the network poses challenges in providing strict delay guarantees. Due to these complexities, attempting to incorporate delay constraints directly into the optimization framework using dual decomposition techniques may not be feasible.

To address this issue, we propose to use Network Calculus which offers a suitable toolset to analyze the worst-case bounds on delays and buffer requirements in a network \cite{le2001network}. By incorporating the derived delay bounds as constraints in the optimization problem, we can better manage the allocation of network resources to meet the desired delay requirements for each client. 


Although Network calculus helps in computing delay bounds, it turns out that such computations require knowledge of global network parameters that are not available at the client site. One may then use an SDN controller to obtain network parameters. However, the controller can only manage the switches/routers in its domain and does not have access to influence sending rates on the client side. Hence, feasible implementation of the derived algorithm  requires the design of a suitable mechanism such that requisite information is disseminated without introducing overheads. We propose to use Explicit Congestion Notification (ECN) which is already a part of the TCP header.

\subsection{Contributions}
Based on the delineated challenges, we present the contributions of the paper.
\begin{itemize}
    \item We develop a model based on network calculus that helps in characterizing the delay bound of a TCP client/source. \textit{To the best of our knowledge, this is the first proposal that incorporates the use of network calculus to compute delay bounds and derives an analytical TCP algorithm using distributed optimization techniques.}
    \item Using the derived bound, we formulate a Network Utility Maximization (NUM) problem with the objective of maximizing the network utility subject to network capacity and delay bound requirements. A dual descent-based solution method is developed which requires the knowledge of global network parameters for implementation. We state convergence results and also show how to account for packetization effects in networks.
    \item We propose a semi-distributed TCP implementation with the help of SDN and OpenFlow switches by marking the ECN bit. We call our algorithm "TCP Slice" since it is dedicated to being used for slices with delay constraints and their management.
    \item We present the steady state and transient performance of TCP Slice and present some important discussions in contrast to fairness-based schemes which are the state of art in the literature. Additionally, we discuss the possibility of a proactive approach without spending any time in the transients that cause significant delays.
\end{itemize}

The rest of the paper is organized as follows: Section 
\ref{sec:modelprob} describes the system model and the problem definition. As described, we append an additional constraint to restrict the delay bound. The relevant background for deriving the proposed bounds is provided in Section \ref{sec:netback}, followed by our proposed bounds in Section \ref{sec:delaycomp}. Thereafter, we reformulate and present the final problem forms in Section \ref{sec:probreform}. Our derived algorithm is proposed and its implementation steps are presented in Section \ref{sec:implementation} followed by relevant results and discussions in Section \ref{sec:results}. Finally, we make some important concluding remarks with future directions in Section \ref{sec:conclusion}.

\section{Model Description and Problem Statement}
\label{sec:modelprob}
We consider a network modeled as a Graph with $G = (V, E)$, where $V$ denotes the set of vertices and $E (\subseteq V^2)$ denotes the set of edges. Some of these vertices are associated with end devices like a mobile, a PC, AR/VR headset/equipment, etc., that run applications. We denote these vertices as the source vertices $V_S \subset V$. For each edge $e$, let $S(e)$ denote the set of sources that use the link $e$. Similarly, let $E(s)$ denote the set of edges or links used by the source $s$. Let $x_s$ denote the transmission rate for source $s$ and $U_s(x_s)$ denotes the utility obtained by the source by transmitting at the rate $x_s$. Let $D_s(x_s)$ denote the delay experienced by the source $s$. Let $x_s \in I_s = [m_s,M_s]$ which denotes the range of feasible rates that the source can transmit. Note that one can treat $m_s$ as a minimum bandwidth constraint on source rate $x_s$. Hence, consideration of minimum bandwidth constraints is implicit in the model. Vector notations are written in bold, sets are denoted in capitals and we use $|A|$ to denote the cardinality or size of the set $A$. The subscript $s$ is used for a source $s$ while $e$ for an edge.
The frequently used notations in the paper are summarized in Table \ref{tab:notations}.

\begin{table}[h]
    \caption{Notations}
    \label{tab:notations}
    \centering
    \begin{tabular}{|M{0.15\columnwidth}|M{0.7\columnwidth}|}
         \hline
         \bfseries Notation & \bfseries Description \\
         \hline
         $s$ & A source/client $s \in V_S$\\
         \hline
         $x_s$ & Rate of source $s$\\
         \hline
         $U_s(x_s)$ & Utility gained by source $s$ by sending at rate $x_s$\\
         \hline
         $n$ & Number of sources $= |V_S|$\\
         \hline 
         $\bm{x}$ & A vector of source rates $(x_1, x_2, \dots, x_{n})$\\
         \hline
         $\bm{x}_{-s}$  & Vector of rates except rate of source $s$ \\
         \hline
         $D_s(\bm{x})$ & Delay experienced by source $s$ due to $x_s$ and $\bm{x}_{-s}$ \\
         \hline
         $d_s$ & Delay bound to be satisfied for source $s$\\
         \hline
         $c_e$ & Capacity of an edge $e$ \\
         \hline
         $E(s)$ & Set of edges used by source $s$ \\
         \hline
         $S(e)$ & Set of sources used by edge $e$ \\
         \hline
         $C_s$ & Set of capacities of edges used by $s$ \\
         \hline
         $c_s^m$ & Minimum capacity in $C_s, ~ (c_s^m = \min C_s)$\\
         \hline
         $\beta_s(t)$ & Effective service curve for source $s$\\
         \hline
         $D_s^{NC}$ & Delay bound for source $s$, calculated using network calculus \\
         \hline
         $x_{-s}$ & Sum of rates of sources other than $s$ \\
         \hline
         $l_{max}$ & Maximum packet size\\
         \hline
         $\sigma$ & Maximum burst size of leaky bucket \\         
         \hline
         $p_e$ & Price computed for a link $e$\\
         \hline
         $p_j$ & Price computed for a source $j$\\
         \hline
         $p_{-s}$ & Sum of prices of sources other than $s$\\
         \hline
         $p^s$ & Sum of prices of links used by source $s$\\
         \hline
         $m_s^e$ & Probability of marking the ECN bit \\
         \hline
    \end{tabular}
    
\end{table}
The following optimization comes into effect where the objective is to maximize the utility subject to delay and physical network constraints:
\subsection{Primal Problem}
\begin{subequations}
\label{eq:primalopt}
\begin{gather}
\max_{x_s \in I_s} \sum_{s} U_s(x_s)\\    
   s.t. ~~ D_s(\bm{x}) = D_s(x_s,\bm{x}_{-s}) \leq d_s ~~\forall s 
    \label{eq:delay}\\
    \sum_{s\in S(e)} x_s \leq c_e ~~\forall e
\end{gather}
\end{subequations}

It is understood that the delay function in \eqref{eq:delay} is dependent on both the source rate and the rate of other sources (that are in the path of the source). This is a complicating non-separable constraint that cannot be easily handled by means of the dual decomposition method, which is commonly used for distributed optimization in case of separable objectives and constraints. To circumvent this issue, we need to find an approximate expression for the delay bound, which would help in obtaining a separable solution. To this purpose, we intend to use network calculus which is a well-known tool for obtaining delay-bound expressions in networks. However, this treatment requires some background, and the relevant results are presented in the section below.









\section{Network Calculus}
 We will present the relevant results in the theory which helps us to obtain an expression for the delay bound in a network.
\subsection{Background}
\label{sec:netback}
Evidently, ensuring delay and backlog bounds require restrictions on the arrival and service processes. Consider the description of the generated traffic at a node by the cumulative process $A(t)$ which provides the total number of bits/bytes/packets arriving at the node until time $t$. Typically, the arrivals are restricted using arrival curves with the following definition. 

\begin{definition}[Arrival Curve \cite{le2001network,chang2000performance}]
    An cumulative arrival process $A(t)$ is said to have an arrival curve $\alpha(t)$ iff $A(t) - A(s) \leq \alpha(t-s), ~ \forall ~ 0\leq s\leq t \Leftrightarrow A(t) \leq \inf_{s\in [0,t]} [A(s) + \alpha (t-s)]$
\end{definition}

From the definition of an arrival curve, it is clear that an arrival curve restricts the number of packets generated in a given interval. A typical and practical example of an arrival curve is the output of a leaky-bucket implementation at the source. If a source implements a leaky bucket to the arriving traffic, that operates at the rate of $x$ and bucket length (or buffer length) $\sigma$, the arrival curve is given by :

$$\alpha(t) = \sigma + x t$$

A leaky or token bucket operates in the following way: The arriving packets are stored in a buffer (theoretically of infinite size). The bucket is a separate buffer with capacity $\sigma$, in which tokens are generated at rate $x$. Each packet that arrives at a leaky bucket, finds a token in the token bucket, takes the token, and leaves the bucket immediately. Since the bucket size is limited by $\sigma$ and the number of tokens that are generated in an interval of $(t-s)$ is $x(t-s)$, the maximum number of packets that can go out of this system in an interval of $(t-s)$ is $\sigma + x(t-s)$. Hence, $\alpha(t) = \sigma + x t$. Often this arrival curve is termed to be $(\sigma,x)$-upper constrained.
\begin{lemma}[Aggregate Multiplexing \cite{le2001network,chang2000performance}]
    If two flows $A_1$ and $A_2$ are aggregated at a node, and $\alpha_1$, $\alpha_2$ are their arrival curves, then $\alpha_1 + \alpha_2$ is an arrival curve of the aggregated flow $A_1 + A_2$.
\end{lemma}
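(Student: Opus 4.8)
The plan is to argue directly from the definition of an arrival curve, since aggregating two flows at a node is nothing more than pointwise addition of their cumulative arrival processes. First I would fix the modeling convention that makes this precise: if $A_1$ and $A_2$ are the cumulative processes of the two flows, the cumulative process of the aggregate is $A(t) = A_1(t) + A_2(t)$, because every bit arriving at the node belongs to exactly one of the two constituent flows, so the running count of the union is the sum of the running counts.

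Next I would take an arbitrary pair $0 \le s \le t$ and expand the increment of the aggregate over the interval $[s,t]$ as $A(t) - A(s) = (A_1(t) - A_1(s)) + (A_2(t) - A_2(s))$. Applying the arrival-curve hypothesis to each flow separately gives $A_i(t) - A_i(s) \le \alpha_i(t-s)$ for $i=1,2$, and summing these two inequalities yields $A(t) - A(s) \le \alpha_1(t-s) + \alpha_2(t-s) = (\alpha_1+\alpha_2)(t-s)$. Since $s$ and $t$ were arbitrary with $0 \le s \le t$, this is exactly the statement that $\alpha_1 + \alpha_2$ is an arrival curve for $A_1 + A_2$; the equivalent infimum form then follows immediately from the equivalence recorded in the definition of an arrival curve.

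I do not anticipate a genuine obstacle here: the result is an immediate consequence of additivity of the cumulative processes together with the fact that inequalities can be added termwise. The only point that deserves a sentence of justification is the modeling step, namely that ``aggregation at a node'' is represented by $A = A_1 + A_2$ (no loss, and no reordering that would affect the count); once that is stated, the proof reduces to the single two-line estimate above. If one wishes, the statement extends to any finite number of flows by induction on the number of flows, applying the two-flow computation at the inductive step, but for the stated case the direct argument suffices.
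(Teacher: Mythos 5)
Your proof is correct and is exactly the standard argument: the paper itself states this lemma without proof, citing the network-calculus references, and the cited proof is precisely your two-line computation (aggregate cumulative process is the sum $A_1+A_2$, increments add, and each increment is bounded by its own arrival curve). Nothing is missing; your remark on the modeling step $A = A_1 + A_2$ is the only point needing a word, and you handled it.
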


Similar to restricting the arrivals, the service process can also be defined in terms of service curves. If $t_0$ denotes the last time instant from $t$ when the server becomes busy (or backlog starts), we must have $D(t) - D(t_0) = c(t-t_0)$. By definition of backlog, it must be that $D(t_0) = A(t_0)$ and hence $D(t) - A(t_0) = c(t- t_0) \Rightarrow D(t) = A(t_0) + c(t- t_0)\geq \inf_{s\in [0,t]} [A(s) + c(t-s)] $.

\begin{definition}[Service Curve \cite{le2001network}]
    If $B(t)$ be the cumulative departure and $A(t)$ the cumulative arrival, then $\beta(t)$ is said to be a service curve iff
    $D(t) \geq \inf_{s\in[0,t]} A(s) + \beta(t-s), ~ \forall ~ t$
\end{definition}
\begin{definition}[Strict Service Curve \cite{le2001network}]
We say that system S offers a strict service curve $\beta$ to a flow if,
during any backlogged period of duration $u$, the output of the flow is at least equal to $\beta(u)$. Every strict service curve is a service curve.
\end{definition}
The service curve ensures that a minimum number of bits/bytes/packets are served in an interval. For a work conserving (server is in operation whenever the buffer is non-empty) constant rate server or a link with serving rate $c$, the strict service curve is 
$\beta(t) = ct$. In network calculus, often we represent these types of service curves with a generalized form of

$$\beta_{R,T}(t) = R[t-T]^+$$

where $[x]^+ = \max\{0,x\}$. These are also termed as rate-latency servers.

\begin{definition}[Min-Plus Convolution ($\otimes$) \cite{le2001network}]
We observe that both arrival and service curves involve calculating $\inf_{s\in [0,t]} [A(s) + h(t-s)]$. Analogous to convolution in filtering theory, this operation can also be perceived as convolution, where the integral or sum is replaced by an infimum (minimum), and the product is replaced by addition or plus. Thus, this operation is termed as min-plus convolution denoted by the operator $\otimes$. 
$$A \otimes h = \inf_{s\in [0,t]} [A(s) + h(t-s)]$$
\end{definition}
We present a result termed as concatenation theorem which simply states that the concatenation of network elements (like servers) leads to the convolution of their service curves similar to cascading of filters in signal processing.

\begin{theorem}[Concatenation of Nodes \cite{le2001network}]
Assume a flow traverses systems $S_1$ and $S_2$ in sequence. Assume that $S_1$ offers a service curve of $\beta_i$, $i=1,2$ to the flow. Then the concatenation of the two systems offers a service curve of $\beta_1 \otimes \beta_2$ to the flow.
 
\end{theorem}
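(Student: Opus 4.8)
The plan is to push the cumulative traffic through the cascade, apply the service-curve inequality once for each node, and then collapse the resulting nested min-plus convolutions into a single one by associativity of $\otimes$. Write $A$ for the cumulative arrival process entering $S_1$, $B$ for the cumulative process leaving $S_1$ (which is exactly the arrival process into $S_2$), and $C$ for the cumulative process leaving $S_2$. The goal is to show $C(t) \ge \big(A \otimes (\beta_1 \otimes \beta_2)\big)(t)$ for all $t$, which is precisely the statement that $\beta_1 \otimes \beta_2$ is a service curve for the concatenated system.

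First I would record the two hypotheses in min-plus form: since $S_1$ offers $\beta_1$ to the flow, $B(t) \ge (A \otimes \beta_1)(t) = \inf_{0 \le s \le t}[A(s) + \beta_1(t-s)]$ for all $t$; since $S_2$ offers $\beta_2$ to the same flow, $C(t) \ge (B \otimes \beta_2)(t) = \inf_{0 \le u \le t}[B(u) + \beta_2(t-u)]$ for all $t$. Next I would substitute the first bound into the second: for each fixed $u \in [0,t]$ we have $B(u) + \beta_2(t-u) \ge (A \otimes \beta_1)(u) + \beta_2(t-u)$, and since this holds termwise the infimum over $u$ is preserved, so $C(t) \ge \inf_{0 \le u \le t}\big[(A \otimes \beta_1)(u) + \beta_2(t-u)\big] = \big((A \otimes \beta_1) \otimes \beta_2\big)(t)$. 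Note that no monotonicity of $\beta_2$ is needed for this — only monotonicity of the infimum operator.

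The one genuinely non-cosmetic step is then associativity of min-plus convolution, $(A \otimes \beta_1) \otimes \beta_2 = A \otimes (\beta_1 \otimes \beta_2)$. If we want this self-contained, it follows by expanding both sides into the same double infimum of $A(s) + \beta_1(u-s) + \beta_2(t-u)$ over the range $0 \le s \le u \le t$ (on the right-hand side, re-index the inner variable via $w = u - s$), and observing that a double infimum may be evaluated in either order. Feeding this back gives $C(t) \ge \big(A \otimes (\beta_1 \otimes \beta_2)\big)(t)$ for all $t$, completing the argument.

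The main obstacle is bookkeeping rather than analysis: one must handle the ranges of the infima consistently — conventionally by extending each $\beta_i$ (and treating the cumulative processes) so that the constraint $0 \le s \le u \le t$ is automatic, e.g. setting $\beta_i(\tau) = 0$ for $\tau < 0$ and taking processes to be nonnegative and nondecreasing — and one must justify the interchange of the two infima in the associativity step. Neither point is deep, and in particular the result needs only the (non-strict) service-curve property, with no continuity or strictness assumptions.
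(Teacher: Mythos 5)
Your proof is correct and is exactly the standard argument: apply the service-curve inequality at each node, chain the two bounds, and invoke associativity of the min-plus convolution. The paper itself states this theorem without proof, citing Le Boudec and Thiran, and your derivation is precisely the proof given in that cited reference, so there is nothing to add.
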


In addition to the mentioned theorem, we also present a result related to finding the convolution of piecewise linear convex functions.
\begin{lemma}[Convolution for Piecewise Linear Convex Functions \cite{le2001network}]
If $f$ and $g$ are convex and piecewise linear $f \otimes g$ is obtained by putting end-to-end the different linear pieces of $f$ and $g$, sorted by increasing slopes.
\end{lemma}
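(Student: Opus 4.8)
\emph{Proof plan.} The statement is a purely min-plus-algebraic fact, independent of the arrival/service-curve results above. Denote by $h$ the candidate function: take every linear piece of $f$ and of $g$, sort the whole collection by increasing slope, and concatenate the pieces end to end starting from $(0,f(0)+g(0))$. Since the slope sequence produced this way is non-decreasing, $h$ is automatically convex and piecewise linear, so the content is the identity $f\otimes g=h$, which I would establish by proving the two inequalities separately. First I would make two harmless normalizations: adding a constant to $f$ or $g$ shifts $f\otimes g$ and $h$ by the same amount, so assume $f(0)=g(0)=0$; and a function defined only on a bounded interval is extended by $+\infty$, so both functions live on $[0,\infty)$ with a (possibly unbounded) last piece. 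The object I would actually track is the derivative: $f'$ and $g'$ are non-decreasing step functions, and $h'$ is exactly the non-decreasing rearrangement of the two derivatives laid side by side.

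For the easy inequality $f\otimes g\le h$, I would, given $t$, walk along the merged slope-sorted pieces until their total length equals $t$, and let $s_1$ and $s_2$ be the lengths contributed by pieces originating from $f$ and from $g$ respectively, so $s_1+s_2=t$. Convexity of $f$ means its own pieces already appear in increasing-slope order, hence the portion of $f$ consumed is precisely $f$ restricted to $[0,s_1]$, and likewise $g$ on $[0,s_2]$; a one-line bookkeeping of $\sum(\text{slope})\times(\text{length})$ then gives $f(s_1)+g(s_2)=h(t)$, so $(f\otimes g)(t)\le h(t)$.

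For the reverse inequality $f\otimes g\ge h$, I would take an arbitrary split $t=s_1+s_2$ and rewrite $f(s_1)+g(s_2)=\int_0^{s_1}f'+\int_0^{s_2}g'$ as $\int\rho\,d\mu$, where $\mu$ is the distribution of the slopes selected by this split, a mass-$t$ measure dominated by the ``full slope spectrum'' $\lambda$ of $f'$ and $g'$ over their entire domains, whereas $h(t)=\int\rho\,d\nu$ with $\nu$ the restriction of $\lambda$ to its $t$ units of smallest slope. The claim then reduces to the rearrangement fact that placing mass $t$ on the globally smallest slopes beats any other mass-$t$ selection, which I would prove by the standard exchange argument: with $c$ the threshold slope defining $\nu$, the signed measure $\mu-\nu$ has zero total mass, is nonpositive on $\{\rho<c\}$ and nonnegative on $\{\rho>c\}$, so $\int\rho\,d(\mu-\nu)\ge c\int d(\mu-\nu)=0$. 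Taking the infimum over splits then yields $(f\otimes g)(t)\ge h(t)$ and completes the argument.

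The step I expect to be the real obstacle is this last rearrangement inequality once \emph{ties} among slopes and \emph{infinite} terminal pieces are allowed; a naive piece-by-piece induction gets tangled in those cases, which is why I would set everything up through the slope-spectrum measures $\lambda,\mu,\nu$ from the outset rather than matching pieces by hand. If that turns out too heavy for the paper, the fallback I would use is Legendre–Fenchel duality: apply $(f\otimes g)^{\star}=f^{\star}+g^{\star}$ for proper closed convex functions, note that conjugation sends a convex piecewise-linear function to another one with slopes and breakpoint abscissae interchanged, so $f^{\star}+g^{\star}$ is merely the pointwise sum over the common domain, and conjugating back reproduces exactly the ``concatenate the pieces in increasing-slope order'' description — a derivation with no combinatorial bookkeeping at all. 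I would present whichever of the two is shorter to write in full.
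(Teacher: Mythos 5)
This lemma is one the paper does not prove at all: it is imported verbatim from the network-calculus literature \cite{le2001network}, and its only role in the paper is to justify Corollary~2.1 (concatenation of rate-latency servers), whose proof is in turn stated as ``a straightforward application of Lemma~2.'' So there is no in-paper argument to compare against; the relevant comparison is with the textbook proof. Your plan is correct and essentially complete. The upper bound $f\otimes g\le h$ via the observation that, by convexity, the pieces of $f$ consumed in the slope-sorted merge form an initial segment $[0,s_1]$ of $f$'s domain (and likewise for $g$) is sound, including under ties, since equal-slope pieces contribute the same total regardless of how the partial piece is split. The lower bound via the slope-spectrum measures $\lambda,\mu,\nu$ is also correct: $\mu\le\lambda$ holds setwise because the selected preimages are subsets of the full domains, and your exchange inequality $\int\rho\,d(\mu-\nu)\ge c\int d(\mu-\nu)=0$ is valid because $\rho-c$ and the sign of $\mu-\nu$ agree off $\{\rho=c\}$ while the integrand vanishes on it; this formulation indeed absorbs ties and infinite terminal pieces cleanly. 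Your fallback through Legendre--Fenchel duality, $(f\otimes g)^{\star}=f^{\star}+g^{\star}$ with conjugation swapping slopes and breakpoints, is essentially the standard route taken in the cited reference (with the small caveat that recovering $f\otimes g$ from its biconjugate needs the inf-convolution to be proper closed convex, which holds for piecewise-linear convex functions); either version would serve, and the duality one is the shorter to write out.
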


\begin{corollary}[Concatenation of Rate-latency servers \cite{le2001network}]
If $\beta_1 (t) = \beta_{R_1,T_1} (t) = R_1[t - T_1]^+$ and  $\beta_2 (t) = \beta_{R_2,T_2} (t) = R_2[t - T_2]^+$, then $\beta_1(t) \otimes \beta_2(t)$ is given by:
\begin{align*}
\begin{split}
\beta_1 \otimes \beta_2 &= \min\{R_1, R_2\} \left[t - (T_1 + T_2)\right]^+\\
&= \beta_{\min\{R_1,R_2\},(T_1 + T_2)}(t)
\end{split}    
\end{align*}
\end{corollary}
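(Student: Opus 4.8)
The plan is to derive the result directly from Lemma~3 (convolution of piecewise linear convex functions), since no genuinely new computation is needed beyond identifying the linear pieces. First I would observe that each rate-latency curve $\beta_{R_i,T_i}(t) = R_i[t-T_i]^+$ is convex and piecewise linear, consisting of exactly two linear pieces: a flat piece of slope $0$ on the interval $[0,T_i]$, and a piece of slope $R_i$ on $[T_i,\infty)$. Hence $\beta_1$ and $\beta_2$ together contribute four linear pieces — two of slope $0$ with finite horizontal lengths $T_1$ and $T_2$, and two of slopes $R_1$ and $R_2$, each of infinite horizontal length.

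By Lemma~3, $\beta_1\otimes\beta_2$ is obtained by laying these four pieces end to end in order of increasing slope. The two slope-$0$ pieces come first, producing a flat segment of total horizontal length $T_1+T_2$; they are followed by the piece of slope $\min\{R_1,R_2\}$, and only then by the piece of slope $\max\{R_1,R_2\}$. The key point — the step I expect to need the most care to phrase cleanly — is that the slope-$\min\{R_1,R_2\}$ piece has infinite horizontal extent, so the concatenated curve follows it for all $t\ge T_1+T_2$ and the $\max$ piece never actually contributes. This yields $\beta_1\otimes\beta_2(t) = \min\{R_1,R_2\}\,[t-(T_1+T_2)]^+ = \beta_{\min\{R_1,R_2\},\,T_1+T_2}(t)$, which is the claim; combined with the Concatenation Theorem it gives the stated interpretation for cascaded rate-latency servers.

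As a self-contained cross-check I would also verify this straight from the definition of min-plus convolution: $(\beta_1\otimes\beta_2)(t) = \inf_{0\le s\le t}\big(R_1[s-T_1]^+ + R_2[t-s-T_2]^+\big)$. For $t\le T_1+T_2$, any $s$ with $t-T_2\le s\le T_1$ makes both bracketed terms vanish, and since the integrand is nonnegative the infimum is exactly $0$. For $t>T_1+T_2$ the integrand is convex in $s$ with breakpoints at $s=T_1$ and $s=t-T_2$; a brief case analysis on the ordering of these breakpoints shows the minimum is attained by routing the excess $t-T_1-T_2$ through whichever server has the smaller rate, giving $\min\{R_1,R_2\}(t-T_1-T_2)$. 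Combining the two regimes reproduces $\beta_{\min\{R_1,R_2\},\,T_1+T_2}$. No step is a serious obstacle; the only thing to watch is the bookkeeping of which linear piece is active on which interval (equivalently, which breakpoint ordering governs the infimum).
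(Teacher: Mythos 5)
Your proposal is correct and follows essentially the same route as the paper, which simply invokes the piecewise-linear convex convolution lemma (Lemma~2 in the paper's numbering, which you call Lemma~3): concatenate the linear pieces in order of increasing slope, so the two latencies add and the smaller rate governs for all $t \ge T_1+T_2$. Your additional direct verification from the definition of $\otimes$ is a sound cross-check but not needed beyond what the paper's one-line argument already covers.
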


The proof is a straightforward application of Lemma 2.
The result can be easily generalized to concatenate $n$ such rate-latency servers. 
Our objective is to find a worst-case delay bound for a source in a network. 

\textbf{Assumption:} \textit{Each source implements a token/leaky bucket to ensure an arrival curve.}\footnote{This is a fair assumption since TCP flow control is implemented with the help of leaky buckets and traffic shaping is an integral part of rate control. Further, while admitting slices, usually there are service level agreements that have information regarding traffic characteristics. Once the characteristic is known, it is customary to design a flow controller that allows the smooth operation of networks.} 

This assumption allows us to use deterministic network calculus to compute delay bounds using the following theorems \cite{le2001network}.
\begin{theorem}[Blind Multiplexing \cite{le2001network}]
 Consider a node serving two flows, 1 and 2, with some unknown arbitration (scheduling policy) between the two flows. Assume that the node guarantees a strict service curve $\beta$ to the aggregate of the two flows. Assume that $\alpha_2$ is an arrival curve for flow 2. Define $\beta_1(t) := [\beta(t) − \alpha_2 (t)] ^+$ . If $\beta_1$ is wide-sense
increasing, then it is a service curve for flow 1.
\end{theorem}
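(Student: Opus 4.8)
The plan is to prove the claim directly from the definitions of \emph{strict} service curve and arrival curve, via the classical ``last zero-backlog epoch'' device. Write $A_i$ and $B_i$ for the cumulative arrival and departure processes of flow $i$ ($i=1,2$), so that each is wide-sense increasing, vanishes at $0$, and satisfies the causality bound $B_i \le A_i$; set $A := A_1 + A_2$ and $B := B_1 + B_2$ for the aggregate. Fixing an arbitrary time $t$, it suffices to exhibit a single epoch $s \in [0,t]$ with $B_1(t) \ge A_1(s) + \beta_1(t-s)$, because the right-hand side is one of the terms over which the service-curve infimum is taken, so this inequality forces $B_1(t) \ge \inf_{u\in[0,t]}\big[A_1(u) + \beta_1(t-u)\big]$, which is exactly the service-curve property for flow $1$.

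I would take $s$ to be the start of the aggregate backlogged period containing $t$, i.e.\ the largest $s \le t$ with $A(s) = B(s)$ (this set is nonempty since $A(0)=B(0)=0$). On $(s,t]$ the aggregate server is continuously backlogged, so the \emph{strict} service-curve hypothesis gives $B(t) - B(s) \ge \beta(t-s)$; this is the only place strictness (as opposed to an ordinary service curve for the aggregate) is used, and it is the heart of the argument. The second ingredient is a decoupling observation at the epoch $s$: from $A_1(s) + A_2(s) = A(s) = B(s) = B_1(s) + B_2(s)$ together with $B_1(s) \le A_1(s)$ and $B_2(s) \le A_2(s)$, both causality inequalities must be tight, i.e.\ $B_i(s) = A_i(s)$ for $i=1,2$.

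Chaining these facts, I would write
\[
B_1(t) \;=\; B(t) - B_2(t) \;\ge\; B(s) + \beta(t-s) - B_2(t) \;\ge\; B(s) + \beta(t-s) - A_2(t),
\]
then bound $A_2(t) \le A_2(s) + \alpha_2(t-s)$ by the arrival-curve property of flow $2$, and finally substitute $B(s) = A_1(s) + A_2(s)$ and $A_2(s) = B_2(s)$; after cancellation this leaves $B_1(t) \ge A_1(s) + \beta(t-s) - \alpha_2(t-s)$. Combining with the trivial monotonicity bound $B_1(t) \ge B_1(s) = A_1(s)$ yields $B_1(t) \ge A_1(s) + [\beta(t-s) - \alpha_2(t-s)]^+ = A_1(s) + \beta_1(t-s)$, as required; the hypothesis that $\beta_1$ is wide-sense increasing is needed only so that $\beta_1$ is an admissible (nonnegative, nondecreasing) service-curve candidate.

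The main obstacle is conceptual rather than computational: one must spot the correct reference epoch and realize that aggregate zero-backlog at $s$ propagates to per-flow equality $B_i(s) = A_i(s)$ — without this, subtracting flow $2$'s worst-case output from the aggregate output would not give a usable lower bound on flow $1$. A lesser pitfall to handle cleanly is the degenerate case where $t$ is not a backlogged instant (then $s = t$ and one checks $\beta_1(0) = 0$ so the bound is immediate), together with the observation that it genuinely suffices to verify the bound at the single point $u = s$, since the service-curve definition is an infimum and any feasible $u$ dominates it.
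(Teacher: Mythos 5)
Your argument is correct and is essentially the canonical proof of this result: the paper itself states the theorem without proof, citing Le~Boudec and Thiran, and their proof uses exactly your device of the last zero-backlog epoch $s$ of the aggregate, the strictness of $\beta$ on $(s,t]$, the per-flow equalities $B_i(s)=A_i(s)$, and the final combination with $B_1(t)\ge B_1(s)$ to absorb the $[\cdot]^+$. The only point to polish is a brief justification that the supremum defining $s$ is attained (e.g.\ via left-continuity of the cumulative processes), which you implicitly assume.
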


If $\beta(t) = ct$, and $\alpha_2(t) = \sigma_2 + x_2 t$, we have, $\beta_1(t) = [ct - \sigma_2 - x_2 t]^+ = (c-x_2)\left[t - \frac{\sigma_2}{c-x_2}\right]^+ $ (assuming $c>x_2$ for stability). Service curves of the form $\beta(t) = R[t-T]^+$ are known as rate-latency curves with rate $R$ and latency $T$.

Typically, wired communication links are work-conserving servers implying that the links are operational at full capacity whenever the buffer is non-empty, or else the server is non-operational. The following theorem states that the output of work-conserving servers are $(\sigma,x)$ constrained if the input is so.

\begin{lemma}[\cite{chang2000performance}]
    Let $A$ and $B$ be the input and the output of a network element. Suppose that $A$ is $(\sigma, x)$ - upper constrained. If the network element is a work-conserving link, then B is also $(\sigma, x)$-upper constrained.
\end{lemma}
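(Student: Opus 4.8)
The plan is to give a direct, first-principles argument in the fluid model, using only the notion of a backlogged (busy) period, the work-conserving property of the link, the arrival-curve bound on $A$, and the elementary fact that $B(t)\le A(t)$ for all $t$ (the cumulative output can never exceed the cumulative input, since the buffer content $A(t)-B(t)$ is nonnegative). Throughout I will assume the stability condition $c\ge x$ on the link rate $c$, as is done elsewhere in the paper; otherwise the backlog is unbounded and the claim is not meaningful.

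Fix $0\le s\le t$; the goal is to show $B(t)-B(s)\le \sigma+x(t-s)$. The first step is to situate $s$ with respect to the backlogged periods of the link. Let $t_0$ be the start of the backlogged period containing $s$, with the convention $t_0=s$ if the buffer is empty at time $s$. By construction the buffer is empty at $t_0$, so $B(t_0)=A(t_0)$, and on the interval $(t_0,s]$ the buffer is continuously non-empty, so the work-conserving link transmits at its full rate $c$ there; hence $B(s)-B(t_0)=c(s-t_0)$, i.e. $B(s)=A(t_0)+c(s-t_0)$ (which correctly reduces to $B(s)=A(s)$ when $t_0=s$).

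The second step is the short computation combining this identity with $B(t)\le A(t)$ and the $(\sigma,x)$-constraint on $A$ applied over $[t_0,t]$:
\begin{align*}
B(t)-B(s) &\le A(t)-A(t_0)-c(s-t_0) \le \sigma + x(t-t_0) - c(s-t_0)\\
&= \sigma + x(t-s) - (c-x)(s-t_0) \le \sigma + x(t-s),
\end{align*}
where the final inequality uses $c\ge x$ and $s\ge t_0$. Since $s$ and $t$ were arbitrary, $B$ is $(\sigma,x)$-upper constrained.

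I expect the only delicate point to be the structural claim in the first step: that a suitable $t_0$ with $B(t_0)=A(t_0)$ exists and that the link runs at rate exactly $c$ on $(t_0,s]$. Making this fully rigorous requires a mild regularity assumption (e.g. left-continuity of $A$ and $B$, standard in network calculus) so that the supremum defining the start of the backlogged period is attained, after which the rest is just the displayed algebra. A minor refinement that even removes the need for the stability hypothesis is to observe separately that $B(t)-B(s)\le c(t-s)$ always holds by the capacity of the link, which already yields the bound when $c\le x$, leaving the computation above to handle only the case $c>x$.
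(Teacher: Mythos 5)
Your proof is correct: the busy-period identity $B(s)=A(t_0)+c(s-t_0)$, the bound $B(t)\le A(t)$, and the $(\sigma,x)$-constraint on $[t_0,t]$ combine exactly as you display, and your capacity-bound remark $B(t)-B(s)\le c(t-s)$ indeed disposes of the case $c\le x$. The paper itself gives no proof of this lemma (it is imported from the cited reference), and your argument is essentially the standard busy-period derivation found there, so there is nothing to flag beyond the mild regularity caveat you already acknowledge.
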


We are now ready with the requisite theory to derive the delay bound in a network and use it to perform the optimization \eqref{eq:primalopt}.
\subsection{Computing Delay Bounds}
\label{sec:delaycomp}
Given the graph model under consideration, we intend to obtain a delay constraint of a source. With the given graph model, the routing information is deemed to be available and hence we can assume that at any given time, the path used from source to destination is known. We consider that a source $s \in S$ obeys an arrival curve $(\sigma_s,x_s)$. Let $C_s$ be the set of link capacities used by the source $s$ to reach its destination and hence the link capacities may be enumerated as $C_s = \{c_s^1, \dots, c_s^{|C_s|}\}$. There may be flows arriving and departing from each of these links. There may be two ways to compute the overall delay of a source through a path. The first way is to over-estimate the worst-case, requiring no knowledge about the exact sources whose flows are going through a link. We assume that all links in the path of source $s$ are subject to all flows. The overall arrival curve of the remaining is $(\sum_{j\neq s, j\in S} \sigma_j, \sum_{j\neq s, j\in S} x_j)$-upper-constrained. We may denote this as $(\sigma_{-s}, x_{-s})$. By, Theorem 2, we may compute the effective service at a link with capacity $c_s^k$ to be $\beta_s^k(t) = (c_s^k - x_{-s})\left[t - \frac{\sigma_{-s}}{c_s^k - x_{-s}}\right] $. By Theorem 1, we may find the effective service curve as 
$$\beta_s(t) = \beta_s^1(t)\otimes \dots \otimes \beta_s^{|C_s|}(t)$$

Considering that all sources are identical w.r.t the amount of burst it can handle i.e. $\sigma_s = \sigma ~ \forall s $, we have the following proposition
\begin{proposition}
For sources with identical token bucket size $(\sigma_s = \sigma ~ \forall s )$, the effective service rate for a flow from source $s$ is given by

\begin{equation}
    \beta_s(t) =  (c_s^m - x_{-s})\left[t - \sum_{k=1}^{|C_s|}\frac{(n-1)\sigma}{c_s^k-x_{-s}}\right]^+ , ~ c_s^m = \min C_s
    \label{eq:effectserv}
\end{equation}
\end{proposition}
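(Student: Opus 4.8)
The plan is to recognize the per-link effective service curves $\beta_s^k$ as rate-latency curves and then fold the whole path into a single rate-latency curve via the concatenation results, so that the resulting rate is the minimum of the per-link rates and the resulting latency is the sum of the per-link latencies.

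First I would pin down the cross-traffic arrival curve seen at the link of capacity $c_s^k$. Aggregating the flows other than $s$, Lemma 1 gives the aggregate an arrival curve $\big(\sum_{j\neq s}\sigma_j\big)+\big(\sum_{j\neq s}x_j\big)t$; invoking the hypothesis $\sigma_j=\sigma$ for all $j$ together with $n=|V_S|$, this collapses to $(n-1)\sigma + x_{-s}t$, i.e.\ the cross traffic is $\big((n-1)\sigma,\,x_{-s}\big)$-upper constrained. I would note that Lemma 3 keeps this bound from growing as the cross traffic passes through the earlier work-conserving links, which is what legitimizes using the same pair $\big((n-1)\sigma,\,x_{-s}\big)$ at every link in this worst-case (all-flows-on-all-links) over-estimate.

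Next I would apply Theorem 2 (blind multiplexing) at link $k$ with aggregate strict service curve $\beta(t)=c_s^k t$ (the strict service curve of the work-conserving constant-rate link) and cross-traffic arrival curve $(n-1)\sigma + x_{-s}t$, obtaining $\beta_s^k(t)=\big[c_s^k t-(n-1)\sigma-x_{-s}t\big]^+=(c_s^k-x_{-s})\big[t-\tfrac{(n-1)\sigma}{c_s^k-x_{-s}}\big]^+$, valid under the stability condition $c_s^k>x_{-s}$, which also makes the curve wide-sense increasing as Theorem 2 demands. Thus $\beta_s^k=\beta_{R_k,T_k}$ with $R_k=c_s^k-x_{-s}$ and $T_k=(n-1)\sigma/(c_s^k-x_{-s})$. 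Then, using Theorem 1 to write $\beta_s=\beta_s^1\otimes\cdots\otimes\beta_s^{|C_s|}$ and the $|C_s|$-fold generalization of the Corollary (which follows from Lemma 2 by laying the linear pieces end-to-end in increasing-slope order), this convolution equals $\beta_{\min_k R_k,\ \sum_k T_k}$. Since $x_{-s}$ is independent of $k$, $\min_k R_k=\min_k(c_s^k-x_{-s})=(\min C_s)-x_{-s}=c_s^m-x_{-s}$ and $\sum_k T_k=\sum_{k=1}^{|C_s|}(n-1)\sigma/(c_s^k-x_{-s})$, which is exactly \eqref{eq:effectserv}.

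I do not expect a genuine obstacle here: the argument is essentially bookkeeping layered on Theorems 1--2 and the Corollary. The steps that need care are the aggregation of the cross-traffic burstiness to exactly $(n-1)\sigma$ (using the identical-bucket hypothesis and Lemma 1, plus Lemma 3 so that it does not accumulate along the path) and the implicit stability requirement $c_s^k>x_{-s}$ at every link --- equivalently $c_s^m>x_{-s}$ --- without which the per-link curves are not valid rate-latency service curves and the concatenation step does not apply.
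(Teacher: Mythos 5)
Your argument is correct and matches the paper's own route: the paper derives the per-link curves $\beta_s^k$ via blind multiplexing with the aggregated $(\sigma_{-s},x_{-s})$ cross traffic in the text preceding the proposition, and then proves the proposition simply as "an application of Corollary 2.1" (concatenation of rate-latency servers giving the minimum rate and summed latencies). Your additional remarks on the stability condition $c_s^k > x_{-s}$ and on Lemma 3 preserving the cross-traffic constraint along the path are consistent with, and slightly more explicit than, the paper's presentation.
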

\begin{proof}
The proof is an application of Corollary 2.1.
\end{proof}

\begin{proposition}
The worst-case delay for source $s$, $D_s$ can be obtained as 
\begin{equation}
    D_s^{NC} = \frac{\left(|C_s|(n-1)+1\right)\sigma}{(c_s^m - x_{-s})}, \quad c_s^m = \min C_s
    \label{eq:delaybound}
\end{equation}
\end{proposition}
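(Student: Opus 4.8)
The plan is to apply the standard network-calculus delay bound---the maximum horizontal deviation between a flow's arrival curve and its end-to-end service curve---to the effective service curve $\beta_s$ supplied by Proposition 1, and then to relax the exact expression into the stated closed form. First I would note that source $s$ enters the network through its leaky bucket, so its cumulative arrivals admit the arrival curve $\alpha_s(t)=\sigma+x_s t$ (using $\sigma_s=\sigma$), while by Proposition 1 the flow receives the end-to-end rate-latency service curve $\beta_s(t)=R_s[t-T_s]^+$ with $R_s=c_s^m-x_{-s}$ and $T_s=\sum_{k=1}^{|C_s|}\frac{(n-1)\sigma}{c_s^k-x_{-s}}$. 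For this to be well posed one needs $c_s^k-x_{-s}>0$ for every $k$ (so each blind-multiplexing curve used in Proposition 1 is wide-sense increasing) and $x_s\le R_s$, i.e. $x_s+x_{-s}\le c_s^m$; both are the natural stability/capacity conditions, which I would simply assume.

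Next I would invoke the classical result \cite{le2001network} that a flow with arrival curve $\sigma+\rho t$ served by a rate-latency curve $R[t-T]^+$ with $\rho\le R$ has a worst-case virtual delay of exactly $T+\sigma/R$, namely the horizontal gap between the two curves. Substituting $R=R_s$, $\rho=x_s$, and $T=T_s$ gives the tight bound
$$\sum_{k=1}^{|C_s|}\frac{(n-1)\sigma}{c_s^k-x_{-s}}+\frac{\sigma}{c_s^m-x_{-s}}.$$
Finally, since $c_s^m=\min C_s$ we have $c_s^k-x_{-s}\ge c_s^m-x_{-s}>0$, hence $\frac{1}{c_s^k-x_{-s}}\le\frac{1}{c_s^m-x_{-s}}$ for each $k$; bounding each of the $|C_s|$ latency terms this way and collecting with the remaining $\sigma/(c_s^m-x_{-s})$ term yields $D_s^{NC}=\frac{(|C_s|(n-1)+1)\sigma}{c_s^m-x_{-s}}$ as an upper bound on the worst-case delay, which is the claim.

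The main obstacle here is conceptual rather than computational. One must be comfortable reading off the horizontal-deviation delay bound for a concatenated rate-latency server fed by a token-bucket flow, and, more importantly, one should recognise that the stated formula is a deliberate over-estimate in which every link capacity along the path is replaced by the smallest one, $c_s^m$. This relaxation is made not merely for convenience but so that $D_s^{NC}$ depends on the other sources only through the single aggregate $x_{-s}$ and on the path only through $c_s^m$, which is exactly the structure needed to make the delay constraint amenable to the dual-decomposition step used later. A minor point worth verifying is that the generalised Corollary 2.1 (concatenation of $|C_s|$ rate-latency curves) indeed applies, which it does because, under $c_s^k-x_{-s}>0$, each hop's blind-multiplexing curve from Theorem 2 is a legitimate wide-sense increasing rate-latency service curve.
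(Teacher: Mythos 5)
Your proposal is correct and follows essentially the same route as the paper: it computes the horizontal deviation between the token-bucket arrival curve $\sigma + x_s t$ and the rate-latency effective service curve of Proposition 1 (giving $T_s + \sigma/R_s$), then relaxes each $c_s^k$ to $c_s^m$ using $c_s^k \geq c_s^m$ to obtain the stated closed form. The only difference is cosmetic — the paper solves $\sigma = \beta_s(D_s^{NC})$ directly rather than quoting the classical $T+\sigma/R$ bound — and your explicit statement of the stability conditions is a harmless refinement.
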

\begin{proof}
    To find the delay bound, we would compute the largest possible horizontal distance between the arrival curve and the service curve.
\begin{align}
\sigma &=  (c_s^m - x_{-s})\left[D_s^{NC} - \sum_{k=1}^{|C_s|}\frac{(n-1)\sigma}{c_s^k-x_{-s}}\right]^+\\
\begin{split}
D_s^{NC} &=  \frac{\sigma}{(c_s^m - x_{-s})} + \sum_{k=1}^{|C_s|}\frac{(n-1)\sigma}{c_s^k-x_{-s}} 
\\
&\leq \frac{\sigma}{(c_s^m - x_{-s})} + \sum_{k=1}^{|C_s|}\frac{(n-1)\sigma}{c_s^m-x_{-s}} ~ (\because c_s^k \geq c_s^m, \forall k)\label{eq:ineq}\\ 
&=  \frac{\left(|C_s|(n-1)+1\right)\sigma}{(c_s^m - x_{-s})}
\end{split}
\end{align}
\end{proof}
An alternate approach would be to have the information regarding exact flows through the links and evaluate the effective service curve. An SDN controller can calculate the exact effective service curve and compute the tightest possible delay bound. However, to keep the model simple and intuitive, a better bound can be obtained by using \eqref{eq:delaybound} itself. 
\begin{proposition}
A tighter delay bound for source $s$, $(D_s^{NC})$ can be obtained as 
\begin{equation}
    D_s^{NC} = \frac{\left(|C_s|\left(\max_{k\in E(s)}|S(k)|-1\right)+1\right)\sigma}{\min_{k\in E(s)}(c_s^k - \sum_{j\neq s, j\in S(k)} x_j)}
    \label{eq:revdelaybound}
\end{equation}
\end{proposition}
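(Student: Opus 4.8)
The plan is to repeat the derivation behind Proposition~1 and Proposition~2 essentially verbatim, except that the interfering traffic on each link is now resolved exactly rather than over-approximated by all $n-1$ other sources. First I would fix the source $s$ and index the links on its path by $k = 1, \dots, |C_s|$ (equivalently $k \in E(s)$). On link $k$ the cross-traffic is the aggregate of the flows in $S(k)\setminus\{s\}$; under the identical-bucket assumption each of these flows enters the network $(\sigma, x_j)$-upper constrained, and by Lemma~3 (work-conserving links preserve $(\sigma,\cdot)$-constraints) it is still $(\sigma, x_j)$-constrained when it reaches link $k$, so by Lemma~1 (aggregate multiplexing) the cross-traffic there has arrival curve $((|S(k)|-1)\sigma,\ x_{-s}^{(k)})$ with $x_{-s}^{(k)} := \sum_{j \in S(k),\, j\neq s} x_j$. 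Applying Theorem~2 (blind multiplexing) to the constant-rate strict server $\beta(t)=c_s^k t$ then gives the residual rate-latency service curve
\[
\beta_s^k(t) = \bigl(c_s^k - x_{-s}^{(k)}\bigr)\Bigl[t - \tfrac{(|S(k)|-1)\sigma}{c_s^k - x_{-s}^{(k)}}\Bigr]^+,
\]
which is wide-sense increasing precisely when the stability condition $c_s^k > x_{-s}^{(k)}$ holds.

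Next I would concatenate these per-link curves. By Theorem~1 the end-to-end service curve seen by $s$ is $\beta_s = \beta_s^1 \otimes \cdots \otimes \beta_s^{|C_s|}$, and since each $\beta_s^k$ is a convex, piecewise-linear rate-latency curve, the $n$-server generalization of Corollary~2.1 — exactly as used in the proof of Proposition~1 — yields
\[
\beta_s(t) = \min_{k\in E(s)}\bigl(c_s^k - x_{-s}^{(k)}\bigr)\Bigl[t - \sum_{k=1}^{|C_s|}\tfrac{(|S(k)|-1)\sigma}{c_s^k - x_{-s}^{(k)}}\Bigr]^+ .
\]
Then, as in the proof of Proposition~2, I would read off the worst-case delay as the largest horizontal distance between the leaky-bucket arrival curve $\alpha_s(t)=\sigma + x_s t$ and $\beta_s$; for a rate-latency curve $\beta_{R,T}$ fed by a $(\sigma,x_s)$-constrained flow with $R\ge x_s$ this distance is $T + \sigma/R$, so
\[
D_s^{NC} = \sum_{k=1}^{|C_s|}\frac{(|S(k)|-1)\sigma}{c_s^k - x_{-s}^{(k)}} + \frac{\sigma}{\min_{k\in E(s)}\bigl(c_s^k - x_{-s}^{(k)}\bigr)} .
\]

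Finally I would loosen this exact expression into the stated closed form by the same style of relaxation used in \eqref{eq:ineq}, now applied to the numerators as well: replace each factor $(|S(k)|-1)$ by $\max_{k\in E(s)}|S(k)|-1$ and each denominator $c_s^k - x_{-s}^{(k)}$ by $\min_{k\in E(s)}(c_s^k - x_{-s}^{(k)})$, which collapses the sum of $|C_s|$ now-identical terms and, after absorbing the trailing $\sigma/\min_k(\cdot)$, gives exactly \eqref{eq:revdelaybound}. I expect no deep obstacle; the only real care needed is the \emph{bookkeeping} that lets every cross-traffic burst be taken equal to $\sigma$ at every link on $s$'s path (this is precisely where Lemma~3 and the identical-bucket assumption enter) together with the collection of stability inequalities $c_s^k > \sum_{j\in S(k),\, j\neq s} x_j$ for each $k$, needed so that Theorem~2 and Corollary~2.1 apply — everything after that is the routine algebra already carried out in Propositions~1--2.
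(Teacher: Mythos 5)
Your proposal is correct, but it reaches \eqref{eq:revdelaybound} by a more first-principles route than the paper does. The paper's own proof never re-derives service curves: it takes the already-established coarse bound \eqref{eq:delaybound} and argues verbally that substituting $n \mapsto \max_{k\in E(s)}|S(k)|$ (the worst burst multiplicity actually seen on $s$'s path) and $c_s^m - x_{-s} \mapsto \min_{k\in E(s)}(c_s^k - \sum_{j\neq s, j\in S(k)} x_j)$ (the worst residual capacity actually possible) can only keep the expression an upper bound, while conceding it need not be the least upper bound. You instead redo the Proposition~1--2 machinery per link with the exact cross-traffic: residual rate-latency curves via Theorem~2, concatenation via Theorem~1 and the generalized Corollary~2.1, horizontal deviation $T + \sigma/R$, and only then relax numerators to $\max_k|S(k)|-1$ and denominators to $\min_k(c_s^k - x^{(k)}_{-s})$. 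This buys you a provable chain of inequalities, an explicit intermediate bound $\sum_{k}\frac{(|S(k)|-1)\sigma}{c_s^k - x^{(k)}_{-s}} + \frac{\sigma}{\min_k (c_s^k - x^{(k)}_{-s})}$ that is in fact tighter than the stated \eqref{eq:revdelaybound}, and explicit per-link stability conditions $c_s^k > \sum_{j\in S(k), j\neq s} x_j$, none of which the paper records; what it costs is nothing beyond the bookkeeping you already note, namely the shared modeling assumption (invoked via Lemma~3 and identical buckets, and equally implicit in the paper's own Propositions~1--2) that each cross-flow still has burst $\sigma$ at every interior link of $s$'s path. The one point the paper asserts that you do not address is the comparison showing \eqref{eq:revdelaybound} $\le$ \eqref{eq:delaybound} (``tighter''), but that follows immediately from $\max_k|S(k)|\le n$ and $c_s^k - \sum_{j\in S(k),j\neq s}x_j \ge c_s^m - x_{-s}$, so it is a cosmetic omission rather than a gap.
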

\begin{proof}
    We look to replace $n$ in the numerator by the maximum number of flows that $s$ interacts with on a link and the denominator by the minimum possible residual capacity. Recall that $E(s)$ denotes the set of edges used by source $s$ and $S(e)$ denotes the set of sources that use the link $e$. In the numerator we replace $n$ (in \eqref{eq:delaybound}), the total number of sources by the maximum number of sources that use a link $e \in E(s)$. This is the maximum possible burst that the arrival curve of $s$ would face in its path. In the denominator, we replace the overestimated $(c_s^m - x_{-s})$ by the residual capacity of the most congested link among the links used by $s$. At any given moment, this is the worst possible left-behind capacity for flow $x_s$ (the overestimate is even worse and might not even occur in practice as we consider all flows are mixed with others). Hence, the resulting expression will always be an upper bound on the delay. However, this may not be the suprema and hence is not the least upper bound.
\end{proof}

\subsection{Effect of Packetization}

The Network calculus results used for obtaining the proposed distributed algorithm hold true for a fluid-based model. However, in a network, one is typically interested in per-packet delays. Packetization has an important implication in terms of network calculus as the results can no longer be directly applied and a packetizer \cite{le2001network} must be introduced at every server (or link in this case). Considering a packetizer changes the service curves for each link. As shown in \cite{le2001network}, the service curve for a server operating at a rate $R$ followed by a packetizer is $
\beta(t) = R[t - \frac{l_{max}}{R}]^+$, where the maximum length of a packet $l_{max}$. Thus, in this case,  $
\beta(t) = c[t - \frac{l_{max}}{c}]^+$, $
\alpha_2 (t) = \sigma_2 + x_2 t$, we have $\beta_1 (t) = \left[ct - \sigma_2 - x_2 t - \l_{max}\right]^+ = (c-x_2)\left[t - \frac{\sigma_2 + l_{max}}{c - x_2}\right]^+$. Thus, the effective service curve \eqref{eq:effectserv} changes to 

\begin{equation}
    \beta_s(t) = (c_s^m - x_{-s}) \left[t - \sum_{k=1}^{|C_s|}\frac{(n-1)\sigma+l_{max}}{c_s^m - x_{-s}}\right]^+  
\end{equation}
where $c_s^m = \min C_s$. Thus, \eqref{eq:revdelaybound} should be modified as 
\begin{equation}
    D_s^{NC} = \frac{\left(|C_s|\left(\max_{k\in E(s)}|S(k)|-1\right)+1\right)\sigma + |C_s|l_{max}}{\min_{k\in E(s)}(c_s^k - \sum_{j\neq s, j\in S(k)} x_j)}
    \label{eq:revpackdelaybound}
\end{equation}




For simplicity, we proceed with \eqref{eq:delaybound}, however the presented ideas can be extended for \eqref{eq:revpackdelaybound} as well and is used for our simulations.
\section{Network Utility Maximization with Delay Constraints}
\label{sec:probreform}
Based on the developed model, we now concentrate on the problem of our concern. Using \eqref{eq:delaybound}, we may obtain a constraint for each source as follows:
Let $d_s$ denote a delay constraint that needs to be satisfied.

$\frac{\left(|C_s|(n-1)+1\right)\sigma}{(c_s^k - x_{-s})} \leq d_s \Rightarrow x_{-s} \leq \min C_s - \frac{\left(|C_s|(n-1)+1\right)\sigma}{d_s}$
Hence, the primal problem changes to
\subsection{Primal Problem}
\begin{subequations}
\begin{gather}
\max_{x_s \in I_s} \sum_{s} U_s(x_s)\\
   s.t. ~~ \sum_{j\neq s} x_j  \leq \min C_s - \frac{\left(|C_s|(n-1)+1\right)\sigma}{d_s}  ~~\forall s 
   \label{eq:moddelay}
   \\
    \sum_{s\in S(e)} x_s \leq c_e ~~\forall e
\end{gather}
\label{eq:modopt}
\end{subequations}
As we can observe now, by using Network calculus-based delay bound, we are able to write \eqref{eq:delay} in terms of the sum of source rates \eqref{eq:moddelay}. We are now in a position to use dual decomposition technique in order to obtain a distributed solution.
\subsection{Dual Problem}
The dual problem can be formulated as:
\begin{align*}
\begin{split}
L(\bm{x,p}) 
&=\sum_{s} \Bigg[U_s(x_s) - x_s\sum_{j\neq s} p_j 
+ p_s\min C_s - \\ 
& \quad p_s\frac{\left(|C_s|(n-1)+1\right)\sigma}{d_s} - x_s \sum_{e \in E(s)} p_e\Bigg] + \sum_e p_e c_e
\end{split}
\end{align*}


%

where $p_s$ and $p_e$ correspond to prices or Lagrange multipliers for each source and link. We may define,
\begin{equation}
\begin{split}
  B_s(p_s,p_e) &= \max_{x_s\in I_s} U_s(x_s) -  x_s \left(\sum_{j\neq s} p_j +  \sum_{e\in E(s)}p_e \right)\\ 
  &=\max_{x_s\in I_s} U_s(x_s) -  x_s (p_{-s} + p^s) 
  \label{eq:sourceopt}
  \end{split}
\end{equation}
where $p_{-s} = \sum_{j\neq s} p_j$ and $p^s = \sum_{e\in E(s)}p_e$. $B_s(p_s,p_e)$ can be interpreted as the effective utility of a source which is the difference of its utility and prices associated with violation of delay and capacity constraints. The optimal rate would try to maximize this difference.  
Hence, 
\begin{equation*}
\begin{split}
\max_{\bm{x}} L(\bm{x},\bm{p}) &= Y(\bm{p}) =  \sum_s B_s(p_s,p^s)\\ &+ 
\sum_s p_s \left(\min C_s - \frac{\left(|C_s|(n-1)+1\right)\sigma}{d_s}  \right) + \sum_e p_e c_e
\end{split}
\end{equation*}
Thus, the dual problem is:

\begin{equation}
\min_{p_e, p_s \geq 0, ~\forall e,s}Y(\bm{p})
\label{eq:dual}
\end{equation}
 In this model, each source pays two kinds of prices $p_{-s}$ and $p^s$. While $p_{-s}$ is associated with the price paid in regards to delay, $p^s$ is the price paid for using each link per unit bandwidth. 


\textbf{Assumption:} \textit{The function $-U_s(x_s)$ is continuous, differentiable and strongly convex in $x_s$.}

This assumption allows us to solve \eqref{eq:sourceopt} uniquely. We have:

\begin{equation}
    x_s^*(\bm{p}) = U_s'^{-1}(p_{-s}+p^s)
    \label{eq:sourcerate}
\end{equation}





\section{Semi-distributed TCP Slice Algorithm}
\label{sec:algo}
We solve the dual problem \eqref{eq:dual} using gradient projection method for source prices and link prices respectively. 

\subsection{Solution using Dual Descent}
For a given price vector $\bm{p}$, each source obtains $x_s^*(\bm{p})$.

\begin{align}
    p_e(t+1) &= \left[p_e(t) - \gamma\frac{\partial Y(\bm{p}(t))}{\partial p_e}\right]^+ ~\forall e\in E\\
    p_s(t+1) &= \left[p_s(t) - \gamma\frac{\partial Y(\bm{p}(t))}{\partial p_s}\right]^+ ~~~\forall s\in S
\end{align}

We have:
\begin{align*}
 Y(\bm{p}) &=  
 \sum_s U_s(x_s^*) -  x_s^* (p_{-s} + p^s) +\\
 & \quad \sum_s p_s \left(\min C_s - \frac{\left(|C_s|(n-1)+1\right)\sigma}{d_s}\right) + \sum_e p_e c_e
\end{align*}


where $x_s^*$ is the maximizing $x_s$. This must satisfy:

\begin{equation}
    \frac{\partial U_s(x_s^*)}{\partial x_s^*}  - (p_{-s} + p^s) = 0
\end{equation}

We need to differentiate $Y$ w.r.t the price variables for each source and link. 

Differentiating $Y$ w.r.t to other source variable say $p_j, j\neq s$
\begin{align*}
    \frac{\partial Y(\bm{p})}{\partial p_j} 
    &= -\sum_{s \neq j} x_s^* + \left(\min C_j - \frac{\left(|C_j|(n-1)+1\right)\sigma}{d_j}\right)
\end{align*}

Differentiating $Y$ w.r.t to link price $p_e$
\begin{align*}
    \frac{\partial Y(\bm{p})}{\partial p_e} 
    &=c_e - \sum_{s\in S(e)} x_s^*
\end{align*}
Thus, we have:

\begin{align}
    \begin{split}
        p_{j\neq s}(t+1) 
        & = \Bigg[p_j(t) - \gamma_j  \bigg(\min C_j - \\
        & \quad \frac{\left(|C_j|(n-1)+1\right)\sigma}{d_j} - \sum_{s \neq j}x_s^* \bigg)  \Bigg]^+ \label{eq:delayupdate}
    \end{split}
    \\
    p_e(t+1) &= \left[p_e(t) - \gamma_e \left(c_e - \sum_{s\in S(e)} x_s^*\right)  \right]^+ \label{eq:linkupdate}
\end{align}

Each source computes $x_s(\bm{p})$ for a given $\bm{p}$ using \eqref{eq:sourcerate}. Links $e$ will update their prices $p_e$ following \eqref{eq:linkupdate}. Then, sources should update the prices $p_j$ using \eqref{eq:delayupdate}. Note that in \eqref{eq:delayupdate}, each time the price is decremented by $\gamma_j$ times the difference of the denominator and numerator of \eqref{eq:delaybound}. The same principle is followed while implementing \eqref{eq:revpackdelaybound}.



From \eqref{eq:delayupdate}, it is evident that each source requires the knowledge of the minimum capacity edges of other sources and also the sum rates of other users, to compute its own rate. However, this information is not readily available to a source node. Thus, a mechanism has to be developed that facilitates a source node to update its prices. To facilitate this, a random exponential marking (REM) scheme is designed similar to the one described in \cite{athuraliya2001rem}.  Although the queues/buffers on the edges can compute $p_e(t)$ independently and implement \eqref{eq:linkupdate} as described in  \cite{athuraliya2001rem}, now the same must be aided with additional adjustments so that \eqref{eq:delayupdate} is not neglected. We propose to perform this with help of SDN controller, which controls the intermediate openflow switches.

\subsection{Implementation with SDN controller}
\label{sec:implementation}
We now devise a scheme that may be implemented with help of an SDN controller that has global information regarding the entire network and routing states. Specifically, we propose SDN to compute \eqref{eq:delayupdate} and use the computed prices to update the marking probability at the OpenFlow switches.

For a given source $s$, it needs to update its rate according to \eqref{eq:sourcerate} which requires computing $p_{-s} + p^s$. To do this, we impose an exponential marking probability distribution for each edge by which a packet from source $s$ traverses. We define $m_s^e(t)$ to be the probability of marking a packet at edge $e$. Then, the probability of a packet being marked after traveling through a path (say for source $s$) can be computed by considering the probability that the packet is marked in any one of the links used by source $s$. The marking information is fed back to the source through the TCP acknowledgment facility which should allow the source to compute \eqref{eq:sourcerate}. Hence we want the end-to-end marking probability to capture the sum $p_{-s} + p^s$. This leads to the following equations

\begin{gather}
     1-\prod_{e\in E(s)}^{} \left(1-m_s^e(t)\right) = 1- e^{-\left(p_{-s}+ p^s\right)} \nonumber \\
    \Rightarrow m_s^e(t) = 1 - e^{-\left(\frac{\sum \limits_{j \neq s} p_j}{|E(s)|} + p_e \right)}
    \label{eq:markprob}
\end{gather}


We present our semi-distributed algorithm. We break our algorithm into two parts. A source starts sending packets in the network with its minimum capacity $m_s$. Once the packets reach SDN switches, they communicate with the SDN regarding flow setup. Since we are concerned with slices, the a network function named network slice management function is run at the SDN to monitor the slices. We propose that TCP slice can be a part of this function. Over time, the switches collect statistics regarding the data rate of users and send them to the SDN for management purposes. The SDN then computes the prices $p_j$ for each source using \eqref{eq:delayupdate} and sends it to the switches for marking the ECN bit. The price $p_e$ \eqref{eq:linkupdate} can either be computed at the switches or; the SDN, being a central node, can also compute the same. The probability of marking a packet from source $s$ is computed using \eqref{eq:markprob} and switches use this probability to set the ECN bit. Algorithm \ref{algo:switch} describes these operations. On receiving the packets, the source can keep computing the probability of marked packets and compute the price by an inverse transformation. Once the prices are computed by the source, it uses \eqref{eq:sourcerate} to find its rate. These steps are demonstrated by Algorithm \ref{algo:source}. 
\SetKwComment{Comment}{/* }{ */}

\begin{algorithm}
\caption{Algorithm to calculate rates for a source $s$}\label{alg:two}
\SetKwInOut{KwIn}{Input}
\SetKwInOut{KwOut}{Output}
\KwIn{$x_s(t)$}
\KwOut{$x_s(t+1)$}
Compute Probability of marked packets using received ECN bit 

$P_M(t) = \frac{\text{\# marked packets received till t+1}}{\text{\# packets received till t+1}}$\;
Compute the sum of link prices $p_{-s}(t) + p^s(t) = -\ln{(1-P_M(t))}$\;
Compute the source rate $x_s(t+1) = U_s^{'-1}\left[-\ln{\left(1-P_M(t)\right)}\right]$\;
\label{algo:source}
\end{algorithm}
\begin{algorithm}
\caption{Algorithm to  calculate prices $p_j$ and $p_e$}\label{alg:two}
\SetKwInOut{KwIn}{Input}
\SetKwInOut{KwOut}{Output}
\KwIn{$p_e(t)$, $p_j(t)$ and $x_s(t)$}
\KwOut{$p_e(t+1)$ and $p_j(t+1)$}
\textbf{@SDN (Update Source prices):}
$p_{j}(t+1) = \left[p_j(t) - \gamma_j  \left(\min C_j - \frac{\left(|C_s|(n-1)+1\right)\sigma}{d_s} - \sum_{s \neq j}x_s(t) \right)  \right]^+$\;
\textbf{@Links/Edges (Update Edge prices):}
$p_e(t+1) = \left[p_e(t) - \gamma_e \left(c_e - \sum_{s\in S(e)} x_s(t)\right)  \right]^+$\;
\textbf{@Switches: }
Sends estimated data rates of users to SDN\;
Gets the updated source and edge prices for a source\;
Marks packets with probability $m_s^e(t)$ computed using \eqref{eq:markprob}
\label{algo:switch}
\end{algorithm}
\subsection{Convergence}
It is easy to see that the primal problem \eqref{eq:modopt} is convex (considering minimization of $-U(.)$ as the objective), with linear constraints. In \cite{alghunaim2020linear,luo1993convergence}, it is shown that under the assumptions of strong convexity of the objective, the dual ascent algorithm (or descent in our case of maximization), converges linearly i.e. $|f(x) - f(x^*)|\leq \epsilon$ in $O(\log(1/\epsilon))$ iterations.
\section{Results and Discussions}
\label{sec:results}
\begin{figure}
    \centering \includegraphics[width=0.6\columnwidth]{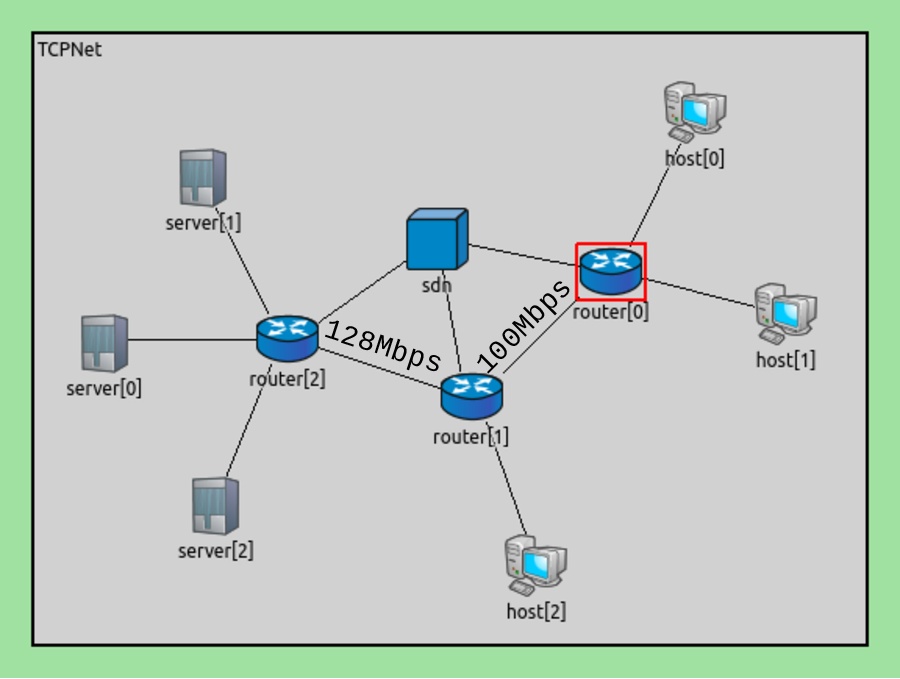}
    \caption{Simulation Setup 
 in OMNeT++. Two hosts are connected to a bottleneck link with 100 Mbps datarate while one host is connected to a 128 Mbps link which is shared by the other two as well. Host $i$ communicates with Server $i$. Other links are 1Gbps with negligible effect on delay bounds.}
    \label{fig:simusetup}
\end{figure}

In this section, we discuss three important aspects of our designed algorithm. First, we will present the steady state and transient performance of our algorithm. It is important to observe that in any delay sensitive networks, transients will eventually lead to queues which brings us to our discussion on an SDN specific digital-twin based implementation of our algorithm. Further, we discuss how our algorithm differs from the state of art schemes.
\subsection{Parameters and setup}
To evaluate the performance of the proposed TCP algorithm, we simulate a simple network as shown in \figurename \ref{fig:simusetup} using OMNeT++. The sending rates are bounded for each source $[m_s = 1 Mbps, M_s = 100 Mbps]$. We choose the utility function to be $U_s (x_s) = a_slog(1+x_s)$ with $a_s = 10^5$. Note that this is strongly concave as $U''(x_s) \leq -10^5/(1+10^6)<0$ using the fact that $x_s \geq 10^6$. We consider a maximum packet size of 1518 bytes which is the standard for Ethernet. The delay bound is set to $1$ ms which is required for satisfying eMBB slice. 

Each host is equipped with a token bucket with $\sigma = 1$ packet (1518 bytes) and as mentioned in Section \ref{sec:netback},  we regulate the bucket rate with help of our algorithm. The step sizes for switches are set as $\gamma_e = 10^{-6}$ while that for SDN is set as $\gamma_j = 10^{-7}$. We did not pursue for an optimal choice of step size, which can also be performed by using techniques in optimization theory \cite{jorge2006numerical}. All prices are set to zero at start. Price updates are performed every $5$ms. The implementation is made available at \url{https://github.com/dibbend8/tcpexperiments/tree/main/tcpSlice_Latest}.
\begin{figure*}[h]
\centering
\subfloat[]{\includegraphics[width=0.3\textwidth]{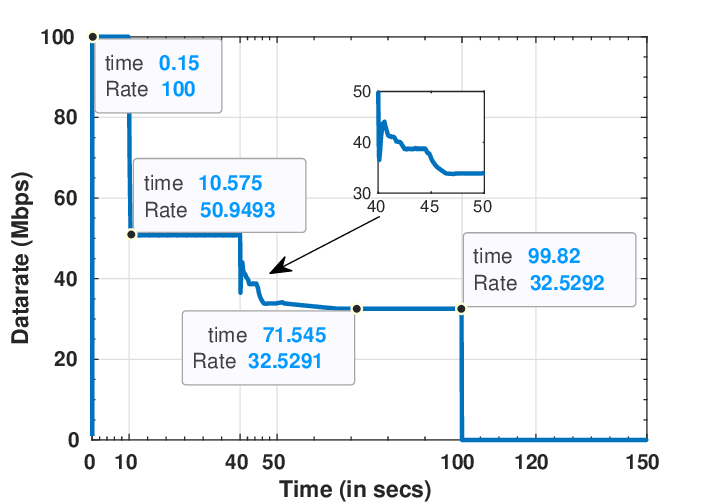}
\label{fig:host0data}}
\quad
\subfloat[]{\includegraphics[width=0.3\textwidth]{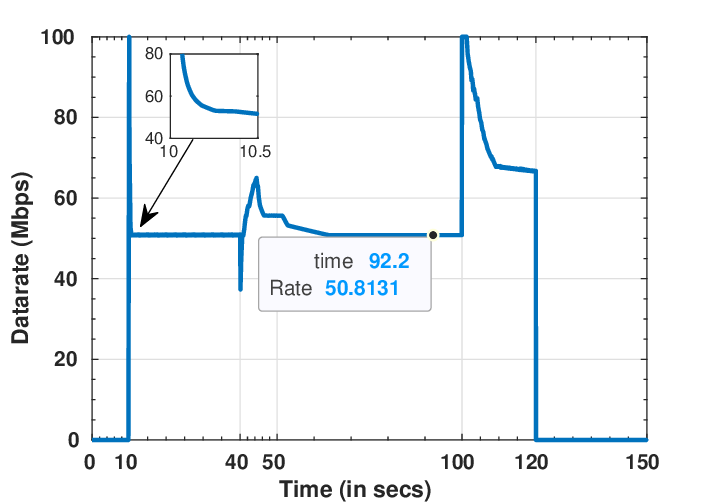}
\label{fig:host1data}}
\quad
\subfloat[]{\includegraphics[width=0.3\textwidth]{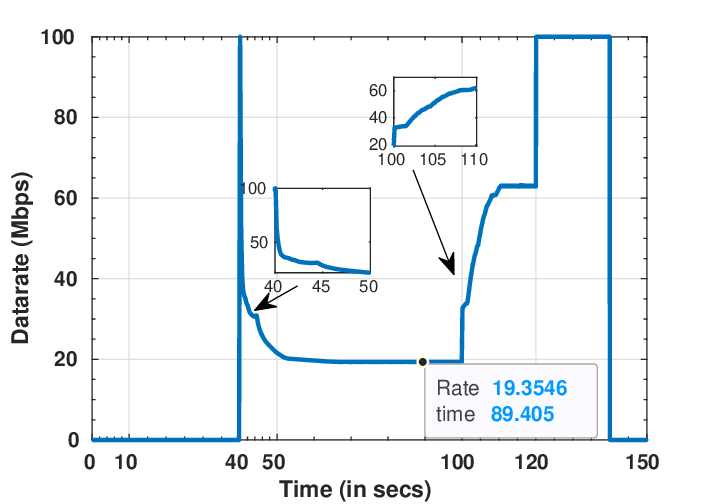}
\label{fig:host2data}}
\quad
\subfloat[]{\includegraphics[width=0.3\textwidth]{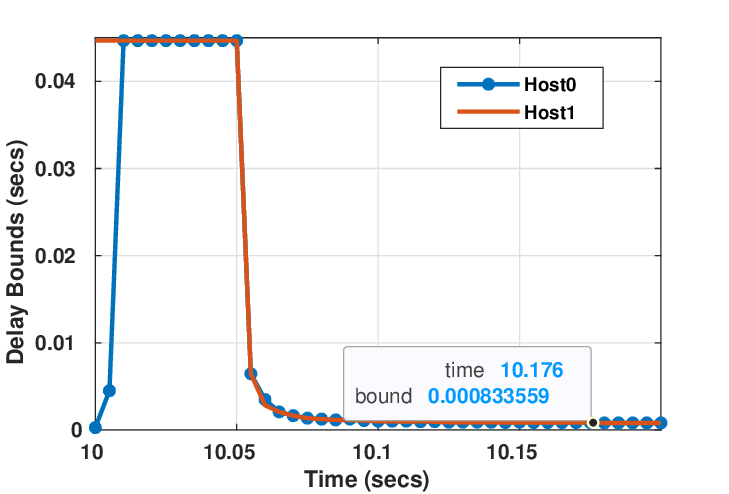}
\label{fig:bound01}}
\quad
\subfloat[]{\includegraphics[width=0.3\textwidth]{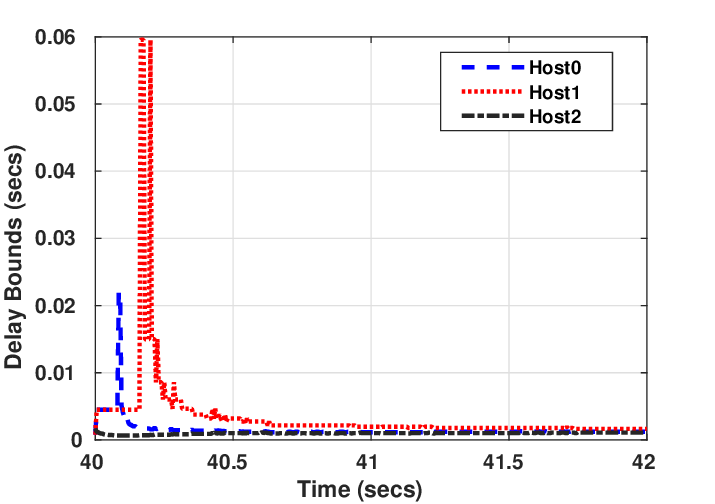}
\label{fig:bound012}}
\quad
\subfloat[]{\includegraphics[width=0.3\textwidth]{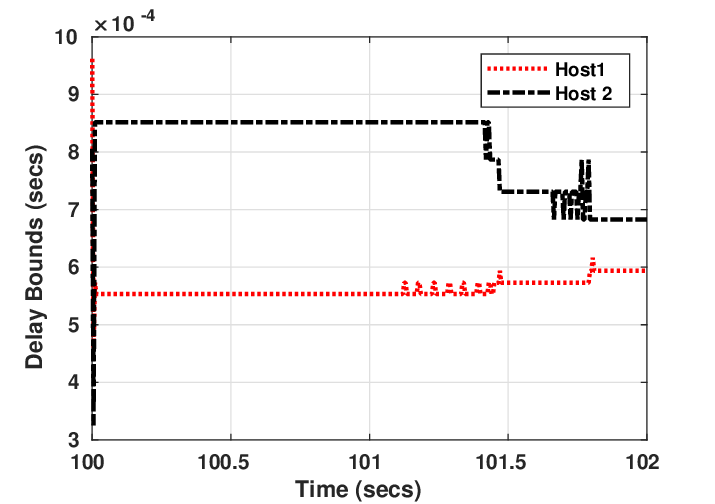}
\label{fig:bound12}}
\caption{ (a), (b), (c) Sending rates for Host 0, 1 and 2.  The hosts join and leave the network as follows: Host 0 is already in the network from the start for 10 seconds, then Host 1 joins. After 40 seconds, Host 2 joins the network. The three hosts are active till 100 seconds when Host 0 leaves the network, and then Host 1 leaves the network at 120 seconds while Host 2 leaves at 140 seconds. The figures clearly show that the steady-state rates reach the bottleneck capacity of 100 Mbps while meeting the obtained bounds. (d),(e),(f) Achieved Delay bounds. From the two figures, an important conclusion comes out: TCP slice is NOT a fair allocation scheme, rather it focuses more on delay-bound satisfaction while maximizing the network utilization.}
\label{fig:hostdata}
\end{figure*}


\subsection{Steady state and Transients}
The designed experiment setup can give us a good idea about how the network behaves over time, how fast it reaches the steady state (optima in our case), and its behavior in the transient phases. \figurename \ref{fig:hostdata} shows the life cycle of hosts entering and leaving the network. Host 0 sends data at a 1Mbps rate. The switches and SDN on receiving new flow, send zero prices which leads Host 0 to boost its sending rate to the maximum. Note that we did not explicitly implement the ECN mechanism and Host 0 would spend some update cycles figuring out the price as it is implemented using probabilistic marking. Since, there are no flows from other sources, and the computed delay bound is minimal ($\sim$ 0.00023 seconds), Host 0 enjoys the full capacity for the first 10s.  

When Host 1 joins, initial prices being zero, Host 1 sends data and SDN computes the price for Host 1 where, the delay bound shoots (recall, it is $\approx 1/(100 - x_0)$ where $x_0$ is nearly 100 Mbps) resulting in a high price for Host 0, forcing it to reduce its rate (see \figurename \ref{fig:bound01}). As shown in \figurename \ref{fig:bound012}, after a few price exchanges, delay bounds reach within limits of $1$ms. In this case, for both the hosts, the 100 Mbps link is the one that produces minimum residual capacity resulting in both sharing their bandwidths around $50$ Mbps where the delay bounds are around $0.8$ms. Note that our objective is always to maximize the link utilization subject to delay constraints.

When Host 2 joins, all three compete for the link capacity while satisfying the delay constraint. Since prices are initially zero, the first price update tells it to send with 100 Mbps. In this case, the bound for Host 2 is of the order $1/(128-100)$ while for others it is $1/[128-150]^+$, which makes the delay bounds shoot for Host 0 and Host 1. It is to be noted that the delay bound calculated for Host 2 is much lower than other hosts since it traverses a single link instead of two. Since it has more room for delay bound to play with, our algorithm will penalize this host more than others to achieve the delay bounds. Thus, \textit{this algorithm does not achieve fairness as most TCP algorithms do. In its present form, our algorithm chooses bound satisfaction over fairness} (Section \ref{sec:fairvssat} for a detailed description).  Also, by looking at \figurename \ref{fig:bound012}, one might be curious regarding why Host 1's bound shoots after Host 0, whereas they must have started from the same bound value, as achieved before Host 2 joins. This is purely probabilistic. We indeed found that both started from the same delay bound value, but Host 0 got an opportunity to send its traffic a bit earlier leading to a higher delay bound calculation at Host 1. The situation could easily have been reversed and we would have observed reversed outcomes. The utilization is around 100 Mbps which is the full bandwidth of the bottleneck link. 

As shown in \figurename \ref{fig:bound12}, when Host 0 leaves the network, Host 2 is further penalized and delay bounds are adjusted between Host 1 and Host 2 so as to achieve the maximum possible throughput of 128 Mbps (see \figurename \ref{fig:host2data} and \figurename \ref{fig:host1data}). 

\subsection{SDN enabled digital-twin Implementation: Proactive vs Reactive approach}

The results show transients and their effects on delay bounds. The effects can be accounted for due to the $(1/x)$ relationship that is severely detrimental and there is no way to avoid this. Largely, TCP protocols have been reactive in nature, i.e. it reacts when something bad happens in the network whereas such a scheme would be significantly damaging when it comes to delays. The main goal of any control algorithm is to reach a steady state as soon as possible and in this case, even more so. 

When a new flow or packet arrives at an OpenFlow switch, it is forwarded to the SDN controller. The controller then directs it to the relevant application (NFV) for creating new flow tables. This allows the SDN to have visibility over new joining hosts and decide the requested level/type of service/slice mapping. By implementing a live replica of the network (a digital twin \cite{wang2022mobility}), the SDN can pre-calculate steady states before data communication begins. The SDN can compute required prices such that the host directly starts from the steady state. Nevertheless, the ECN and relevant control mechanisms should always be in place for maintenance purposes. This is a proactive approach that should lead to a much better solution for time-sensitive networks. 

\subsection{Comparison with Conventional TCP schemes}
\label{sec:fairvssat}
An important aspect of the TCP is the fairness associated with the steady state. For example, both TCP BBR \cite{cardwell2016bbr,cardwell2017bbr} and the regular TCP \cite{low1999optimization} (all variants such as RENO, VEGAS, etc., come in this category), would try to achieve a point where the bottleneck bandwidth is shared equally among the multiple flows using the bottleneck. This steady state is not ideal when there are sources with different delay constraints over the bottleneck link. Consider a situation with a shared link with a link rate of 40 Mbps and let there be applications within the uRLLC slice, one demanding a delay constraint of 1ms and the other, demanding a constraint of 0.5ms. Let us see  the resulting delay bounds if their rates are equally divided. Consider the largest packet size and burst size of 1512 bytes. Table \ref{tab:drawback} shows how a fair allocation exceeds the delay bound of a 0.5ms app, while such a requirement could be satisfied by reducing the rates of the app with a 1ms bound.

\begin{table}[h]
    \centering
    \caption{Fairness vs Delay bound satisfaction}
    \label{tab:drawback}
    \begin{tabular}{|M{0.2\columnwidth}|M{0.23\columnwidth}|c|c|}
    \hline
       $x_1$ ($d_1 = 1$ms) & $x_2$ ($d_2 = 0.5$ms) & $D_1^{NC}$ (ms) & $D_2^{NC}$ (ms)  \\
       \hline
         20     &  20   &  0.6048 & 0.6048\\
         \hline
          15    &   25   &   0.806 & 0.483\\
         \hline
    \end{tabular}
\end{table}

As more applications evolve with such diverse requirements \cite{6GConne15:online}, it would reduce the sending rate of the sources with relaxed delay constraints and increase that of others so as to maintain the delays. However, it should be noted that our approach is not fair and there may be fairer solutions in the feasible range of delay bounds. To look for fair solutions, the optimization may be reformulated to include $\sum_s \sum_{j|j\in S(e)}\sum_{e\in E(s)}|x_s - x_j|$ to be minimized in addition to maximization of $U_s$. This is separable for sources and can be implemented by each source. However, gradient computations may need to be replaced by subgradient computations due to non-differentiability of the modulus operation.

\section{Conclusion}

In this paper, we propose TCP Slice, an algorithm focusing on meeting delay constraints in a network using an optimization framework. We use network calculus to compute delay bounds and achieve a semi-distributed algorithm via dual decomposition. It involves penalizing sources for exceeding capacity and violating delays. We present the requisite technique to implement the same with help of a SDN controller and ECN feedback. We observe the steady state and transient performance of our algorithm which highlights the aspect of \textit{fairness vs bound satisfaction} in our algorithm. Furthermore, we present a proactive approach so that hosts do not encounter unnecessary delays in the transient phases. We compare our approach with existing TCP algorithms and find it to be lacking in fairness which can be rectified by changing the objective. We suggest exploring stochastic network calculus for broader applications, especially in wireless scenarios.
\label{sec:conclusion}


%





\ifCLASSOPTIONcaptionsoff
  \newpage
\fi



\bibliographystyle{IEEEtran}
\bibliography{ref}

\begin{thebibliography}{10}
\providecommand{\url}[1]{#1}
\csname url@samestyle\endcsname
\providecommand{\newblock}{\relax}
\providecommand{\bibinfo}[2]{#2}
\providecommand{\BIBentrySTDinterwordspacing}{\spaceskip=0pt\relax}
\providecommand{\BIBentryALTinterwordstretchfactor}{4}
\providecommand{\BIBentryALTinterwordspacing}{\spaceskip=\fontdimen2\font plus
\BIBentryALTinterwordstretchfactor\fontdimen3\font minus
  \fontdimen4\font\relax}
\providecommand{\BIBforeignlanguage}[2]{{%
\expandafter\ifx\csname l@#1\endcsname\relax
\typeout{** WARNING: IEEEtran.bst: No hyphenation pattern has been}%
\typeout{** loaded for the language `#1'. Using the pattern for}%
\typeout{** the default language instead.}%
\else
\language=\csname l@#1\endcsname
\fi
#2}}
\providecommand{\BIBdecl}{\relax}
\BIBdecl

\bibitem{jacobson1988congestion}
V.~Jacobson, ``Congestion avoidance and control,'' \emph{ACM SIGCOMM computer
  communication review}, vol.~18, no.~4, pp. 314--329, 1988.

\bibitem{han2020framework}
L.~Han, Y.~Qu, L.~Dong, and R.~Li, ``A framework for bandwidth and latency
  guaranteed service in new ip network,'' in \emph{IEEE INFOCOM 2020-IEEE
  Conference on Computer Communications Workshops (INFOCOM WKSHPS)}.\hskip 1em
  plus 0.5em minus 0.4em\relax IEEE, 2020, pp. 85--90.

\bibitem{5GKPIs5G30:online}
``5g kpis | 5g key performance indicators,''
  \url{https://www.rfwireless-world.com/Terminology/5G-KPIs-Key-Performance-Indicators.html},
  (Accessed on 07/26/2023).

\bibitem{giordani2020toward}
M.~Giordani, M.~Polese, M.~Mezzavilla, S.~Rangan, and M.~Zorzi, ``Toward 6g
  networks: Use cases and technologies,'' \emph{IEEE Communications Magazine},
  vol.~58, no.~3, pp. 55--61, 2020.

\bibitem{6GConne15:online}
``6g – connecting a cyber-physical world - ericsson,''
  \url{https://www.ericsson.com/en/reports-and-papers/white-papers/a-research-outlook-towards-6g},
  (Accessed on 07/26/2023).

\bibitem{benzekki2016software}
K.~Benzekki, A.~El~Fergougui, and A.~Elbelrhiti~Elalaoui, ``Software-defined
  networking (sdn): a survey,'' \emph{Security and communication networks},
  vol.~9, no.~18, pp. 5803--5833, 2016.

\bibitem{mckeown2008openflow}
N.~McKeown, T.~Anderson, H.~Balakrishnan, G.~Parulkar, L.~Peterson, J.~Rexford,
  S.~Shenker, and J.~Turner, ``Openflow: enabling innovation in campus
  networks,'' \emph{ACM SIGCOMM computer communication review}, vol.~38, no.~2,
  pp. 69--74, 2008.

\bibitem{foukas2017network}
X.~Foukas, G.~Patounas, A.~Elmokashfi, and M.~K. Marina, ``Network slicing in
  5g: Survey and challenges,'' \emph{IEEE communications magazine}, vol.~55,
  no.~5, pp. 94--100, 2017.

\bibitem{jacobson1992tcp}
V.~Jacobson, R.~Braden, and D.~Borman, ``Tcp extensions for high performance,''
  Tech. Rep., 1992.

\bibitem{brakmo1994tcp}
L.~S. Brakmo, S.~W. O'malley, and L.~L. Peterson, ``Tcp vegas: New techniques
  for congestion detection and avoidance,'' in \emph{Proceedings of the
  conference on Communications architectures, protocols and applications},
  1994, pp. 24--35.

\bibitem{sally2003highspeed}
F.~Sally, ``Highspeed tcp for large congestion windows,'' \emph{RFC3649}, 2003.

\bibitem{cardwell2016bbr}
N.~Cardwell, Y.~Cheng, C.~S. Gunn, S.~H. Yeganeh, and V.~Jacobson, ``Bbr:
  Congestion-based congestion control: Measuring bottleneck bandwidth and
  round-trip propagation time,'' \emph{Queue}, vol.~14, no.~5, pp. 20--53,
  2016.

\bibitem{cardwell2017bbr}
------, ``Bbr: Congestion-based congestion control,'' \emph{Communications of
  the ACM}, vol.~60, no.~2, pp. 58--66, 2017.

\bibitem{low1999optimization}
S.~H. Low and D.~E. Lapsley, ``Optimization flow control. i. basic algorithm
  and convergence,'' \emph{IEEE/ACM Transactions on networking}, vol.~7, no.~6,
  pp. 861--874, 1999.

\bibitem{kelly1998rate}
F.~P. Kelly, A.~K. Maulloo, and D.~K.~H. Tan, ``Rate control for communication
  networks: shadow prices, proportional fairness and stability,'' \emph{Journal
  of the Operational Research society}, vol.~49, pp. 237--252, 1998.

\bibitem{low2022analytical}
S.~H. Low, \emph{Analytical methods for network congestion control}.\hskip 1em
  plus 0.5em minus 0.4em\relax Springer Nature, 2022.

\bibitem{luo1993convergence}
Z.-Q. Luo and P.~Tseng, ``On the convergence rate of dual ascent methods for
  linearly constrained convex minimization,'' \emph{Mathematics of Operations
  Research}, vol.~18, no.~4, pp. 846--867, 1993.

\bibitem{alghunaim2020linear}
S.~A. Alghunaim and A.~H. Sayed, ``Linear convergence of primal--dual gradient
  methods and their performance in distributed optimization,''
  \emph{Automatica}, vol. 117, p. 109003, 2020.

\bibitem{luo2017standardization}
J.~Luo, J.~Jin, and F.~Shan, ``Standardization of low-latency tcp with explicit
  congestion notification: A survey,'' \emph{IEEE Internet Computing}, vol.~21,
  no.~1, pp. 48--55, 2017.

\bibitem{le2001network}
J.-Y. Le~Boudec and P.~Thiran, \emph{Network calculus: a theory of
  deterministic queuing systems for the internet}.\hskip 1em plus 0.5em minus
  0.4em\relax Springer, 2001.

\bibitem{chang2000performance}
C.-S. Chang, \emph{Performance guarantees in communication networks}.\hskip 1em
  plus 0.5em minus 0.4em\relax Springer Science \& Business Media, 2000.

\bibitem{athuraliya2001rem}
S.~Athuraliya, V.~H. Li, S.~H. Low, and Q.~Yin, ``Rem: Active queue
  management,'' in \emph{Teletraffic Science and Engineering}.\hskip 1em plus
  0.5em minus 0.4em\relax Elsevier, 2001, vol.~4, pp. 817--828.

\bibitem{jorge2006numerical}
N.~Jorge and J.~W. Stephen, \emph{Numerical optimization}.\hskip 1em plus 0.5em
  minus 0.4em\relax Spinger, 2006.

\bibitem{wang2022mobility}
Z.~Wang, R.~Gupta, K.~Han, H.~Wang, A.~Ganlath, N.~Ammar, and P.~Tiwari,
  ``Mobility digital twin: Concept, architecture, case study, and future
  challenges,'' \emph{IEEE Internet of Things Journal}, vol.~9, no.~18, pp.
  17\,452--17\,467, 2022.

\end{thebibliography}
\end{document}